\newtheorem{definition}{Definition}
\newtheorem{theorem}{Theorem}
\newtheorem{example}{Example}
\newtheorem{problem}{Problem}
\renewcommand{\P}{\mathbb{P}}
\newcommand{\K}{\mathbb{K}}
\newcommand{\F}{\mathbb{F}}
\newcommand{\Z}{\mathbb{Z}}
\newcommand{\N}{\mathbb{N}}
\renewcommand{\L}{\mathbb{L}}
\newcommand{\ang}[1]{\{#1\}}
\newcommand{\spa}{\textnormal{ }}
\newcommand{\frakf}{\mathfrak{f}}
\newcommand{\frakp}{\mathfrak{p}}
\DeclarePairedDelimiter\floor{\lfloor}{\rfloor}
\begin{document}
\copyrightyear{2019} 
\acmYear{2019} 
\setcopyright{acmlicensed}
\acmConference[ISSAC '19]{International Symposium on Symbolic and Algebraic Computation}{July 15--18, 2019}{Beijing, China}
\acmBooktitle{International Symposium on Symbolic and Algebraic Computation (ISSAC '19), July 15--18, 2019, Beijing, China}
\acmPrice{15.00}
\acmDOI{10.1145/3326229.3326256}
\acmISBN{978-1-4503-6084-5/19/07}
\title{Computing the Characteristic Polynomial of a Finite Rank Two Drinfeld Module}

\author{Yossef Musleh}
\affiliation{%
  \institution{Cheriton School of Computer Science \\ University of Waterloo}
  \city{Waterloo}
  \state{Ontario}
  \country{Canada}
}
\email{ymusleh@uwaterloo.ca}

\author{\'Eric Schost}
\affiliation{%
  \institution{Cheriton School of Computer Science \\ University of Waterloo}
    \city{Waterloo}
  \state{Ontario}
  \country{Canada}
}
\email{eschost@uwaterloo.ca}

\begin{abstract}
  Motivated by finding analogues of elliptic curve point counting
  techniques, we introduce one deterministic and two new Monte Carlo
  randomized algorithms to compute the characteristic polynomial of a
  finite rank-two Drinfeld module. We compare their asymptotic
  complexity to that of previous algorithms given by Gekeler,
  Narayanan and Garai-Papikian and discuss their practical
  behavior. In particular, we find that all three approaches represent
  either an improvement in complexity or an expansion of the parameter
  space over which the algorithm may be applied. Some experimental
  results are also presented.
\end{abstract}

\begin{CCSXML}
<ccs2012>
<concept>
<concept_id>10010147.10010148.10010149</concept_id>
<concept_desc>Computing methodologies~Symbolic and algebraic algorithms</concept_desc>
<concept_significance>500</concept_significance>
</concept>
</ccs2012>
\end{CCSXML}

\ccsdesc[500]{Computing methodologies~Symbolic and algebraic algorithms}

\keywords{Drinfeld module; algorithms; complexity.}

\setcounter{secnumdepth}{4}
\renewcommand{\theparagraph}{\thesubsection.\arabic{paragraph}}
\counterwithin{paragraph}{subsection} 
\titleformat{\paragraph}[runin]{\normalfont\normalsize\bfseries}{\theparagraph.}{1em}{}
\titlespacing*{\paragraph}{0em}{1ex}{1em}
\newcommand{\pref}[1]{{\bf\ref{#1}}}

\maketitle


\section{Introduction}

Drinfeld modules were introduced by Drinfeld in~\cite{Drinfeld74}
(under the name {\em elliptic modules}) to prove certain conjectures
pertaining to the Langlands program; they are themselves extensions of
a previous construction known as the {\em Carlitz
  module}~\cite{Carlitz35}.

In this paper, we consider so-called Drinfeld modules of rank two over
a finite field $\L$. Precise definitions are given below, but this means
that we will study the properties of ring homomorphisms from $\F_q[x]$ to
the skew polynomial ring $\L\ang{\tau}$, where $\tau$ satisfies the
commutation relation $\tau u = u^q \tau$ for $u$ in $\L$. Here, the
{\em rank} of such a morphism $\varphi$ is the degree in $\tau$ of
$\varphi(x)$.

Rank two Drinfeld modules enjoy remarkable similarities with elliptic curves: analogues exist of good reduction, complex
multiplication, etc. Based in part on these similarities, Drinfeld modules have recently
started being considered under the algorithmic viewpoint. For
instance, they have been proved to be unsuitable for usual forms of
public key cryptography~\cite{Scanlon01}; they have also been used to
design several polynomial factorization
algorithms~\cite{PaPo89,vanderHeiden04,Narayanan18,eschost2017arXiv171200669D};
recent work by Garai and Papikian discusses the computation of their
endomorphism rings~\cite{GaPa18}. Our goal is to study in detail the
complexity of computing the
characteristic polynomial of a rank two Drinfeld module over a finite field.

A fundamental object attached to an elliptic curve $E$ defined over a
finite field $\F_q$ is its Frobenius endomorphism $\pi:(x,y) \mapsto
(x^q,y^q)$; it is known to satisfy a degree-two relation with integer
coefficients called its {\em characteristic polynomial}. Much is known
about this polynomial: it takes the form $T^2 - h T + q$, for some
integer $h$ called the {\em trace} of $\pi$, with $\log_2(|h|) \le
\log_2(q)/2 + 1$ (this is known as the Hasse bound). In 1985, Schoof
famously designed the first polynomial-time algorithm for finding the
characteristic polynomial of such a curve~\cite{schoof85}. 


Our main objective is to investigate the complexity of a Drinfeld
analogue of this question. Given a rank two Drinfeld module over a
degree $n$ extension $\L$ of $\F_q$, one can define its Frobenius
endomorphism, and prove that it satisfies a degree-two relation $T^2 -
A T + B$, where $A$ and $B$ are now in $\F_q[x]$. As in the elliptic
case, $B$ is rather easy to determine, and of degree $n$. Hence, our
main question is the determination of the polynomial $A$, which is
known to have degree at most $n/2$ (note the parallel with the
elliptic case).

Contrary to the elliptic case, computing the characteristic polynomial
of a Drinfeld module is easily seen to be feasible in polynomial time:
it boils down to finding the $\Theta(n)$ coefficients of $a$, which
are known to satisfy certain linear relations. Gekeler detailed such
an algorithm in~\cite{GEKELE1991187}; we will briefly revisit it in
order to analyse its complexity, which turns out to be cubic in
$n$. Our main contributions in this paper are several new algorithms
with improved runtimes; we also present experimental results obtained
by an implementation based on NTL~\cite{shoup2001ntl}.

An implementation of Gekeler's algorithm was described
in~\cite{Jung00} and used to study the distribution of characteristic
polynomials of Drinfeld modules by computing several thousands of
them.


\section{Preliminaries}

In this section, we introduce notation to be used throughout the
paper; we recall the basic definition of Drinfeld modules and state
precisely our main problem. For a general reference on these
questions, see for instance~\cite{Goss96}.


\subsection{The Fields $\F_q$, $\K$ and $\L$}\label{ssec:not}

In all the paper, $\F_q$ is a given finite field, of order a prime
power $q$, and $\L \supset \F_q$ is another finite field of degree $n$
over $\F_q$. Explicitly, we assume that $\L$ is given as
$\L=\F_q[z]/\frakf$, for some monic irreducible $\frakf \in \F_q[z]$ of
degree $n$. When needed, we will denote by $\zeta \in \L$ the class $(z \bmod
\frakf)$.

In addition, we suppose that we are given a ring homomorphism $\gamma:
\F_q[x] \to \L$. The kernel $\ker(\gamma)$ of the mapping $\gamma:
\F_q[x] \to \L$ is a prime ideal of $\F_q[x]$ generated by a monic
irreducible polynomial $\frakp$, referred to as the
$\F_q[x]$-\textit{characteristic} of $\L$. Then, $\gamma$ induces an
embedding $\K := \F_q[x]/\frakp \to \L$; we will write $m := [\L :
  \K]$, so that $n = md$, with $d=\deg \frakp$. When needed, we will
denote by $\xi \in \K$ the class $(x \bmod \frakp)$.

Although it may not seem justified yet, we may draw a parallel with
this setting and that of elliptic curves over finite fields. As said
before, one should see $\F_q[x]$ playing here the role of $\Z$ in the
elliptic theory. The irreducible $\frakp$ is the analogue of a prime
integer $p$, so that the field $\K = \F_q[x]/\frakp$ is often thought
of as the ``prime field'', justifying the term ``characteristic'' for
$\frakp$. The field extension $\L$ will be the ``field of
definition'' of our Drinfeld modules.

We denote by $\pi: \L \to \L$ the $q$-power Frobenius $u \mapsto u^q$;
for $i \ge 0$, the $i$th iterate $\pi^i: \L \to \L$ is thus $u \mapsto
u^{q^i}$; for $i \le 0$, $\pi^i$ is the $i$th iterate of $\pi^{-1}$.


\subsection{Skew Polynomials}

We write $\L\ang{\tau}$ for the ring of so-called {\em skew
  polynomials}
\begin{align}\label{def:skewpoly}
\L\ang{\tau} &= \{U=u_0 + u_1 \tau + \cdots + u_s \tau^s \ \mid \ s \in
\N, u_0,\dots,u_s \in \L\}.
\end{align}
This ring is endowed with the multiplication induced by the relation
$\tau u = u^q \tau$, for all $u$ in $\L$.  Elements of $\L\ang{\tau}$
are sometimes called linearized polynomials, since there exists an
isomorphism mapping $\L\ang{\tau}$ to polynomials of the form $u_0x +
u_1 x^q + \cdots + u_s x^{q^s}$, which form a ring for the operations
of addition and composition. 

A non-zero element $U$ of $\L\ang{\tau}$ admits a unique
representation as in~\eqref{def:skewpoly} with $u_s$ non-zero. Its
{\em degree} $\deg U$ is the integer $s$ (as usual, we set $\deg 0
=-\infty$).  The ring $\L\ang{\tau}$ admits a right Euclidean
division: given $U$ and $V$ in $\L\ang{\tau}$, with $V$ non-zero,
there exists a unique pair $(Q,R)$ in $\L\ang{\tau}^2$ such that $U =
QV +R$ and $\deg R < \deg V$.

There is a ring homomorphism
$\iota:\L\ang{\tau} \to {\rm End}_{\F_q}[\L]$ given by
\[\iota: u_0 + u_1\tau + \dots
+ u_s\tau^s \mapsto u_0 {\rm Id} + u_1 \pi + \dots+ u_s \pi^s, \] where
$\rm{Id}: \L \to \L$ is the identity operator and $\pi$ and its powers
are as defined above. This mapping allows us to interpret elements in
$ \L\ang{\tau}$ as $\F_q$-linear operators $\L \to \L$.


\subsection{Drinfeld Modules}

Drinfeld modules can be defined in a quite general setting, involving
projective curves defined over $\F_q$; we will be concerned with
the following special case (where the projective curve in question
is simply $\P^1$).

\begin{definition}\label{def:Drinfeld}
  Let \spa $ \L$ and $\gamma$ be as above.  A rank $r$ {\em Drinfeld module}
  over $(\L,\gamma)$ is a ring homomorphism $\varphi: \F_q[x] \to
  \L\ang{\tau}$ such that
  \[\varphi(x) = \gamma(x) + u_1 \tau + \cdots + u_r\tau^r,\]
  with $u_1,\dots,u_r$ in $\L$ and $u_r$ non-zero.
\end{definition}
For $U$ in $\F_q[x]$, we will abide by the convention of writing
$\varphi_U$ in place of $\varphi(U)$. Since $\varphi$ is a ring
homomorphism, we have $\varphi_{UV} = \varphi_U \varphi_V$ and
$\varphi_{U+V} = \varphi_U+ \varphi_V$ for all $U,V$ in $\F_q[x]$;
hence, the Drinfeld module $\varphi$ is determined entirely by
$\varphi_x$; precisely, for $U$ in $\F_q[x]$, we have $\varphi_U =
U(\varphi_x)$.

We will restrict our considerations to rank two Drinfeld modules. In particular, we will use the now-standard convention of writing $\varphi_x = \gamma(x)
+ g \tau + \Delta \tau^2$. Hence, for a given $(\L,\gamma)$, we can
represent any rank two Drinfeld module over $(\L,\gamma)$ by the pair
$(g,\Delta) \in\L^2$.

\begin{example}
  Let $q = 5$, $\frakf = z^4 +4z^2+ 4z + 2$ and $\L = \F_5[z]/\frakf =
  \mathbb{F}_{625}$, so that $n=4$; we let $\zeta$ be the class of $z$ in
  $\L$.
  Let $\gamma: \F_5[x] \to \F_{625}$ be given by $x \mapsto \zeta$, so
  that $\frakp=\frakf$, $\K=\L=\F_{625}$ and $m=1$. We define the Drinfeld
  module $\varphi: \F_5[x] \to \F_{625}\ang{\tau}$ by $\varphi_x = \zeta + \tau + \tau^2$, so that $(g,\Delta)=(1,1)$.
\end{example}


Suppose $\varphi$ is a rank two Drinfeld module over $(\L,\gamma)$. A
central element in $\L\ang{\tau}$ is called an {\em endormorphism} of
$\varphi$. Since $u^{q^n} = u$ for all $u$ in $\L$, $\tau^n$ is such
an endomorphism. The following key theorem~\cite[Cor. 3.4]{GEKELE1991187}
defines the main objects we wish to compute.
\begin{theorem}\label{charpoly}
  There is a polynomial $T^2 -AT + B \in \F_q[x][T]$ such that
  $\tau^n$ satisfies the  equation
  \begin{equation} \tau^{2n} - \varphi_A \tau^n + \varphi_B = 0,\end{equation}
  with $\deg A \le n/2$ and $\deg B=n$.
\end{theorem}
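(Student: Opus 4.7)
The key starting observation is that $\tau^n$ is central in $\L\ang{\tau}$: since $|\L| = q^n$, every $u \in \L$ satisfies $u^{q^n} = u$, so $\tau^n u = u^{q^n}\tau^n = u\tau^n$, and $\tau^n$ commutes trivially with $\tau$. Consequently $\tau^n$ commutes with $\varphi_x$, so the $\F_q$-subalgebra $R := \F_q[\varphi_x, \tau^n] \subseteq \L\ang{\tau}$ is commutative, and the task becomes finding a nontrivial relation between its two generators.

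My plan is to produce the quadratic relation via a Cayley--Hamilton argument in a well-chosen module. Set $M := \L\ang{\tau}$ and view it as a left module over the commutative ring $\L[x]$: the field $\L$ acts on the left by multiplication, and $x$ acts on the right by right multiplication by $\varphi_x$. Left and right multiplication in a ring always commute, and the right action factors through $\F_q[x]$ because $\varphi$ is a ring homomorphism. Using the identity $\Delta \tau^{i+2} = \tau^i\varphi_x - \gamma(x)^{q^i}\tau^i - g^{q^i}\tau^{i+1}$ (obtained from $\tau^i\varphi_x = \tau^i(\gamma(x) + g\tau + \Delta\tau^2)$), one checks inductively that $M$ is generated as an $\L[x]$-module by $\{1,\tau\}$; a $\tau$-degree comparison shows independence, so $M$ is free of rank $2$. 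Because $\tau^n$ is central, left multiplication by $\tau^n$ is $\L[x]$-linear on $M$, and Cayley--Hamilton yields $\tau^{2n} - a(\varphi_x)\tau^n + b(\varphi_x) = 0$ for some $a, b \in \L[x]$.

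The main obstacle, and the step requiring the most care, is descending the coefficients from $\L[x]$ to $\F_q[x]$: Galois does not act in an obviously compatible way on the $\L[x]$-structure (the Frobenius twist moves $\varphi_x$ to a different Drinfeld module). My plan is to sidestep this by appealing to the standard fact that $\tau^n$ lies in the endomorphism ring $\mathrm{End}(\varphi) \subseteq \L\ang{\tau}$, which is a finitely generated torsion-free $\F_q[x]$-module (via $\varphi$) of rank at most $r^2 = 4$. Hence $\tau^n$ is integral over $\F_q[x]$, and the commutative subring $R \otimes \F_q(x)$ embeds into the finite-dimensional $\F_q(x)$-algebra $\mathrm{End}(\varphi) \otimes \F_q(x)$. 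Since any maximal commutative subfield of a central simple algebra of dimension $d^2$ has dimension $d$, the field $\F_q(x)[\tau^n]$ has $\F_q(x)$-degree at most $2$. Uniqueness of the minimal polynomial of $\tau^n$ then identifies it with $T^2 - aT + b$, forcing $a, b \in \F_q[x]$; write them as $A, B$.

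For the degree bounds, I would compare $\tau$-degrees in $\tau^{2n} = \varphi_A\tau^n - \varphi_B$: since $\varphi_U$ has $\tau$-degree $2\deg U$ for $U \in \F_q[x]$, the three summands have $\tau$-degrees $2n$, $n + 2\deg A$, and $2\deg B$. Matching the top-degree terms forces $\deg A \le n/2$ and $\deg B \le n$. The equality $\deg B = n$ then follows from the ``degree as isogeny'' interpretation: $B$ is the norm of $\tau^n$ in the quadratic endomorphism algebra, and the identity $\mathrm{Nm}(\varphi_U) = U^2$ (whose $\F_q[x]$-degree $2\deg U$ equals the $\tau$-degree of $\varphi_U$) extends to show $\deg_x \mathrm{Nm}(u) = \deg_\tau u$ for any $u \in \mathrm{End}(\varphi)$, giving $\deg B = \deg_\tau \tau^n = n$.
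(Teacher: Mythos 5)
Your opening construction is sound, and it is in fact the standard ``motive'' of $\varphi$: $M=\L\ang{\tau}$, with $\L$ acting by left multiplication and $x$ by right multiplication by $\varphi_x$, is a free $\L[x]$-module on $\{1,\tau\}$ (modulo the small slip that the recursion coefficient is $\Delta^{q^i}$, not $\Delta$), and Cayley--Hamilton applied to the $\L[x]$-linear map ``left multiplication by $\tau^n$'' does give a monic quadratic relation with coefficients in $\L[x]$. The genuine gap is the descent to $\F_q[x]$, exactly the step you flagged. Your argument needs $\mathrm{End}(\varphi)\otimes\F_q(x)$ to be a central simple algebra over $\F_q(x)$ of dimension $d^2$ with $d\le 2$, and that is false in general: by the Tate-type theorem for Drinfeld modules, the center of $\mathrm{End}_\L(\varphi)\otimes\F_q(x)$ is $\F_q(x)(\tau^n)$ itself, which is typically a quadratic extension of $\F_q(x)$ (the algebra is then a commutative field of dimension $2$ over $\F_q(x)$, not of square dimension), and only in the degenerate case $\F_q(x)(\tau^n)=\F_q(x)$ is it a quaternion algebra over $\F_q(x)$. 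So the ``maximal commutative subfield'' bound does not apply; and knowing only that $\mathrm{End}(\varphi)$ has $\F_q[x]$-rank at most $4$ does not exclude $[\F_q(x)(\tau^n):\F_q(x)]=4$ (a quartic field is a division algebra of dimension $4$). Moreover the structural facts you import (finite generation, rank $\le r^2$, any statement pinning down the center) are themselves proved via the Tate module and essentially contain the theorem being proved, so the route is circular as well as broken; the subsequent identification of the $\F_q(x)$-minimal polynomial with the Cayley--Hamilton polynomial over $\L(x)$ is also unjustified (the latter need not be minimal). The irony is that the obstacle you sidestepped can be met head on within your own setup: left multiplication by $\tau$ on $M$ is $\pi$-semilinear, commutes with the $x$-action, and its matrix $S$ in the basis $\{1,\tau\}$ has determinant $-\Delta^{-1}(x-\gamma(x))\neq 0$; centrality of $\tau^n$ gives $NS=S\,\pi(N)$ for the matrix $N$ of $\tau^n$ (with $\pi$ acting coefficientwise on $\L[x]$, fixing $x$), hence $\pi(N)=S^{-1}NS$ over $\L(x)$, so the characteristic polynomial of $N$ is $\pi$-invariant and its coefficients lie in $\F_q[x]$. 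No endomorphism-ring input is needed.

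The degree bounds are also not established as written. In $\tau^{2n}-\varphi_A\tau^n+\varphi_B=0$, comparing top $\tau$-degrees only shows that the maximal degree among $2n$, $n+2\deg A$, $2\deg B$ is attained at least twice; the case $2\deg B=n+2\deg A>2n$ with cancelling leading terms is not excluded, so ``matching the top-degree terms'' does not force $\deg A\le n/2$ and $\deg B\le n$. The standard order (and Gekeler's, which this paper cites for Theorem~\ref{charpoly}) is to establish $\deg B=n$ first --- $B$ is a unit times $\frakp^m$, cf.\ Proposition~\ref{frobnorm} --- and then read $\deg A\le n/2$ off $\varphi_A\tau^n=\tau^{2n}+\varphi_B$; alternatively, in the motive picture one can bound the $x$-degrees of the entries of $N$ directly (each increase of two in $\tau$-degree raises the $x$-degree by at most one), giving $\deg \mathrm{tr}(N)\le\lfloor n/2\rfloor$ and $\deg\det(N)\le n$ without any cancellation argument. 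Finally, the identity $\deg_x\mathrm{Nm}(u)=\deg_\tau u$ for all $u\in\mathrm{End}(\varphi)$, which you use to get $\deg B=n$ exactly, is itself a nontrivial theorem (the Euler--Poincar\'e/degree-of-isogeny statement underlying Proposition~\ref{frobnorm}) and cannot be cited as a routine ``extension'' of $\mathrm{Nm}(\varphi_U)=U^2$. For comparison, the paper gives no proof of Theorem~\ref{charpoly} but cites Gekeler, whose argument runs through the $\frakl$-adic Tate modules; your motive-plus-Cayley--Hamilton strategy is a legitimately different route, but it needs the semilinear descent and the entry-degree (or norm) bounds above to close.
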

The polynomials $A$ and $B$ are respectively referred to as the
\textit{Frobenius trace} and \textit{Frobenius norm} of $\varphi$.
Note in particular the similarity with Hasse's theorem for elliptic
curves over finite fields regarding the respective ``sizes'' (degree,
here) of the Frobenius trace and norm.  The main goal of this paper is
then to find efficient algorithms to solve the following problem.
\begin{problem}\label{pb1}
  Given a rank two Drinfeld module $\varphi = (g,\Delta)$, compute its
  Frobenius trace $A$ and Frobenius norm $B$.
\end{problem}

\begin{example}
In the previous example, we have 
$A = 3x^2+x+3$ and $B =x^4+4x^2+4x+2$.
\end{example}

By composing $\varphi: \F_q[x]\to \L\ang{\tau}$ and
$\iota:\L\ang{\tau} \to {\rm End}_{\F_q}[\L]$ as defined in the
previous subsection, we obtain another ring homomorphism $\Phi:
\F_q[x]\to {\rm End}_{\F_q}[\L]$; we will use the same convention of
writing $\Phi_U=\Phi(U)$ for $U$ in $\F_q[x]$. Thus, we see that a
Drinfeld module equips $\L$ with a new structure as an
$\F_q[x]$-module, induced by the choice of $\Phi_x = \gamma(x) {\rm
  Id} + g \pi + \Delta \pi^2$, with $\pi:\L \to \L$ the $q$-power Frobenius map

Applying $\iota$ to the equality in Theorem~\ref{charpoly}, we obtain
that $\pi^{2n} + \Phi_A \pi^n + \Phi_B$ is the zero linear mapping $\L
\to \L$. Since $\pi^{n}$ is the identity map, and since we have
$\Phi_A = A(\Phi_x)$, $\Phi_B = B(\Phi_x)$, this implies that the
polynomial $1-A+B \in \F_q[x]$ cancels the $\F_q$-endormorphism
$\Phi_x$. Actually, more is true: $1-A+B$ is the characteristic
polynomial of this endomorphism~\cite[Th.~5.1]{GEKELE1991187}.  As it
turns out, finding the Frobenius norm $B$ is a rather easy task (see
Section~\ref{sec:prev}); as a result, Problem~\ref{pb1} can be reduced to
computing the characteristic polynomial of $\Phi_x$.

This shows in particular that finding $A$ and $B$ can be done in $(n
\log q)^{O(1)}$ bit operations (in all the paper, we will use a
boolean complexity model, which counts the bit complexity of all
operations on a standard RAM). The questions that interest us are to
make this cost estimate more precise, and to demonstrate algorithmic
improvements in practice, whenever possible. Our main results are as
follows.
\begin{theorem}\label{theo:main}
One can solve Problem~\ref{pb1}
\begin{itemize}
\item in Monte Carlo time $(n^{1.885} \log q + n \log^2 q)^{1+o(1)}$,
  if the minimal polynomial of $\Phi_x$ has degree $n$ (Section~\ref{sec:narayanan});
\item in time $(n^{2+\varepsilon} \log q + n \log^2 q)^{1+o(1)}$, for
  any $\varepsilon > 0$ (Section~\ref{sec:schoof});
\item in Monte Carlo time $(n^2 \log^2 q)^{1+o(1)}$ (Section~\ref{sec:mc}).
\end{itemize}
\end{theorem}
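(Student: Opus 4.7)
I would treat Theorem~\ref{theo:main} as three independent algorithmic claims, one per bullet and each proved in the section it cites, sharing a common front end. That front end is the reduction already set up in the excerpt: since $1-A+B$ is the characteristic polynomial of the $\F_q$-endomorphism $\Phi_x:\L\to\L$ given by $\Phi_x(u)=\gamma(x)u+g\,\pi(u)+\Delta\,\pi^2(u)$, and since $B$ is obtained separately and cheaply, everything reduces to computing the characteristic polynomial of this explicit $\F_q$-linear map on the $n$-dimensional space $\L$. The one shared primitive is ``apply $\Phi_x$ to a vector''; I would implement it via a Kedlaya--Umans style modular composition to realise the Frobenius $\pi$, at cost $(n \log q)^{1+o(1)}$ per application. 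This already explains the $n\log^2 q$ summand common to all three bounds.

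For the first bullet, the assumption that the minimal polynomial of $\Phi_x$ has degree $n$ forces it to equal the characteristic polynomial, so a Wiedemann/Krylov approach applies: pick a random $v\in\L$ and a random $\F_q$-linear form $\lambda$, compute the sequence $(\lambda(\Phi_x^i v))_{i<2n}$, and read off the minimal polynomial by Berlekamp--Massey. A naive iteration of $\Phi_x$ uses $n$ calls and lands at $n^{2+o(1)}\log q$. To descend to exponent $1.885$, I would adopt a Narayanan-style baby-step/giant-step on $\pi$: precompute the modular-composition data for $\pi,\pi^2,\dots,\pi^{\lceil n^{1/2}\rceil}$, then batch the giant steps through a single rectangular matrix multiplication. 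The constant $1.885$ then drops out of current bounds on the rectangular matrix-multiplication exponent.

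For the deterministic second bullet, I would follow a Schoof template: choose a family of small-degree irreducibles $\ell\in\F_q[x]$ whose degrees sum to just above $n/2$, compute $A\bmod\ell$ inside the $\ell$-torsion of $\varphi$ (an $\F_q[x]/\ell$-module of rank two), and Chinese-remainder the answers. Each local computation reduces to a low-dimensional characteristic polynomial, and summing deterministic costs over a family of total degree $\Theta(n)$ yields the $n^{2+\varepsilon}$ bound. For the third bullet the hypothesis on the minimal polynomial is dropped, so I would simply assemble the matrix of $\Phi_x$ in the natural $\F_q$-basis of $\L$ via $n$ applications of $\Phi_x$, each at cost $\widetilde{O}(n\log q)$, and then hand the explicit matrix to a Las Vegas characteristic-polynomial routine over $\F_q$ of arithmetic cost $\widetilde{O}(n^2)$.

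The main obstacle is the first bullet. Splicing the Kedlaya--Umans cost of $\pi$ into the baby-step/giant-step framework requires a tight accounting of how many modular compositions are paid for outside the rectangular multiplication block, and the precise exponent $1.885$ is fragile: it hinges on current bounds for rectangular matrix multiplication, so even the bookkeeping has to be done carefully. The other two bullets are more routine, but the deterministic bullet still needs care in enumerating enough small irreducible moduli $\ell$ with fully deterministic guarantees, and the Monte Carlo bullet needs an explicit failure-probability bound for the randomized characteristic polynomial subroutine.
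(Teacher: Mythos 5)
Your plan for the first and third bullets has genuine gaps; only the second bullet is close to the paper's route. For the first bullet, a plain Wiedemann recurrence $(\lambda(\Phi_x^i v))_{i<2n}$ accelerated by a ``baby-step/giant-step on $\pi$'' does not go through: the giant steps are powers of $\Phi_x$, not of $\pi$, and the Kaltofen--Shoup automorphism-projection trick you are invoking relies precisely on the fact that $\pi$ is a field automorphism, which $\Phi_x$ is not. This is exactly the obstruction the paper raises against Narayanan's sketch. With a single projection, applying $\varphi_{x^K}$ to one point per giant step already costs on the order of $n^{1+\theta}$, so you never get below quadratic. The paper's actual mechanism is a \emph{block} (Coppersmith-type) Wiedemann: take $\mu=\lfloor n^{0.183}\rfloor$ linear forms and vectors, compute only $O(n/\mu)$ matrix terms via transposed applications of $\Phi_x$ for the baby steps, batch the giant steps by \emph{multipoint evaluation of the skew polynomial} $\varphi_{x^K}$ at $\mu$ points (a baby-step/giant-step inside $\L\ang{\tau}$ reduced to an $(s,s)\times(s,\mu)$ matrix product), assemble everything in one large rectangular product over $\F_q$, and recover $\Gamma$ as the determinant of the matrix generator computed by a PM-basis algorithm. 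Moreover the exponent $1.885$ is obtained from the \emph{square} exponent $\omega\le 2.373$ by optimizing the two blocking parameters; the paper explicitly refrains from using rectangular-multiplication exponents, so your attribution of $1.885$ to $\omega_2$ is off. (Also, the $n\log^2 q$ summand is the one-time cost of computing $x^q \bmod \frakp$ by repeated squaring, not a per-application Kedlaya--Umans cost.)

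For the third bullet, the step ``hand the explicit matrix of $\Phi_x$ to a Las Vegas characteristic-polynomial routine of arithmetic cost $\widetilde{O}(n^2)$'' would fail: no such routine is known for a dense $n\times n$ matrix (the best bounds are of order $n^\omega$), and an $\widetilde{O}(n^2)$ dense charpoly algorithm would be a major result in its own right. Even if you instead ran Wiedemann on the structured operator $\Phi_x$ (matrix-vector products by repeated squaring, which is where the $\log^2 q$ comes from), you would only obtain the \emph{minimal} polynomial $\Gamma$, which may be a proper divisor of the characteristic polynomial --- and the whole point of this bullet is that the degree-$n$ hypothesis of the first bullet is dropped. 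The paper never forms the matrix: it proves $\deg\Gamma\ge n/2$, gets the possible leading coefficient $a_{n/2}$ from Jung's norm/trace formula, and recovers the remaining coefficients of $A$ by solving a $\nu\times\nu$ Hankel system derived from the functional equation $\Phi_A(\alpha)=\alpha+\Phi_B(\alpha)$, all within $(n^2\log^2 q)^{1+o(1)}$. Your second bullet follows the paper's Schoof-style CRT over small irreducibles, but ``computing $A\bmod \ell$ inside the $\ell$-torsion'' needs care to stay deterministic (constructing torsion points requires root-finding in extensions); the paper instead works modulo $F=\varphi_{E}$ in $\L\ang{\tau}$, applying $\tau^{\pm n}$ through $O(\log n)$ Frobenius-twisted matrix products and then inverting $\varphi$ to read off $A \bmod E$.
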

Section~\ref{sec:prev} reviews previous work; it shows that our
results are the best to date, except when $\K=\L$ (the ``prime field
case''), where a runtime $(n^{1.5+\varepsilon} \log q + n^{1 + \epsilon} \log^2
q)^{1+o(1)}$ is possible for any $\varepsilon > 0$~\cite{eschost2017arXiv171200669D}.
Section~\ref{sec:experiments} discusses the practical behavior of
these algorithms; in particular, it highlights that among all of them,
the Monte Carlo algorithm of Section~\ref{sec:mc} features the best
runtimes, except when $\K=\L$, where the above-mentioned result of~\cite{eschost2017arXiv171200669D}
is superior.

Input and output sizes are $\Theta(n \log q)$ bits, so the best we
could hope for is a runtime quasi-linear in $n \log q$; as the theorem
shows, we are rather far from this, since the best unconditional
results are quadratic in $n$. On the other hand, Problem~\ref{pb1} is
very similar to questions encountered when factoring polynomials over
finite fields, and it was not until the work of Kaltofen and
Shoup~\cite{KaSh98} that subquadratic factorization algorithms were
discovered.  We believe that
finding an algorithm of unconditional subquadratic time in $n$ for
Problem~\ref{pb1} is an interesting and challenging question.

The algorithm of Section~\ref{sec:schoof} was directly inspired by
Schoof's algorithm for elliptic curves. We believe this interaction has the potential to
yield further algorithms of interest, perhaps using other
``elliptic'' techniques, such as $p$-adic approaches~\cite{Satoh00} or
Harvey's amortization techniques~\cite{Harvey14}.


\section{Algorithmic Background}

We now discuss the cost of operations in $\L$ and $\L\ang{\tau}$ with runtimes
given in bit operations. Notation ($\F_q,\L,\dots$) is as in~\ref{ssec:not}. To simplify cost
analyses, {\em we assume that $x^q \bmod \frakp$ is known}; we will
see below the cost of computing it once and for all, at the beginning
of our algorithms.


\subsection{Polynomial and matrix arithmetic}\label{ssec:basicpoly}

\smallskip\noindent{\bf 3.1.1.}  Elements of $\L$ are written on 
the power basis $1,\zeta,\dots,\zeta^{n-1}$. On occasion, we use 
$\F_q$-linear forms $\L \to \F_q$; they are given on the dual 
basis, that is, by their values at $1,\zeta,\dots,\zeta^{n-1}$.

Using FFT-based multiplication,
polynomial multiplication, division and extended GCD in degree $n$,
and thus addition, multiplication and inversion in $\L$, can be done
in $(n\log q)^{1+o(1)}$ bit
operations~\cite{Gathen:2003:MCA:945759}. In particular, computing
$x^q \bmod \frakp$ by means of repeated squaring takes $(n\log^2
q)^{1+o(1)}$ bit operations.

\smallskip\noindent{\bf 3.1.2.}  We let $\omega$ be such that over any
ring, square matrix multiplication in size $s$ can be done in
$O(s^\omega)$ ring operations; the best known value to date is $\omega
\le 2.373$~\cite{CoWi90,LeGall14}. Using block techniques,
multiplication in sizes $(s,t) \times (t,u)$ takes $O(stu
\min(s,t,u)^{\omega-3})$ ring operations. For matrices over
$\F_q$, this is $(stu \min(s,t,u)^{\omega-3} \log q)^{1+o(1)}$ bit
operations; over $\L$, it becomes $(stu \min(s,t,u)^{\omega-3}
n\log q)^{1+o(1)}$.

We could sharpen our results using the so-called
exponent $\omega_2$ of rectangular matrix multiplication in size
$(s,s) \times (s,s^2)$. We can of course take $\omega_2\le\omega+1 \le
3.373$, but the better result $\omega_2 \le 3.252$ is
known~\cite{LeUr18}. We will not use these refinments in this paper.

\smallskip\noindent{\bf 3.1.3.} Of particular interest is an operation
called {\em modular composition}, which maps $(F,G,H) \in \F_q[x]^3$
to $F(G) \bmod H$, with $\deg H=n$ and $\deg F,\deg G < n$.  Let
$\theta \in [1,2]$ be such that this can be done in $(n^\theta
\log q)^{1+o(1)}$ bit operations for inputs of degree $O(n)$.

Modular composition is linear in $F$; we also require that its
transpose map can be computed in the same runtime $(n^\theta
\log q)^{1+o(1)}$. In an algebraic model, counting $\F_q$-operations
at unit cost, the {\em transposition principle}~\cite{KaKiBs88}
guarantees this, but this is not necessarily the case in our bit
model, hence our extra requirement.

For long, the best known value for $\theta$ was Brent and Kung's
$\theta = (\omega+1)/2$~\cite{BrKu78}. A major result by Kedlaya and
Umans proves that we can actually take $\theta = 1+\varepsilon$, for
any $\varepsilon >
0$~\cite{Kedlaya:2011:FPF:2340436.2340448}. In practical
terms, we are not aware of an implementation of Kedlaya and Umans'
algorithm that would be competitive: for
practical purposes, $\theta$ is either $(\omega+1)/2$ (for deterministic approaches) or $(\omega+2)/3$, and $\omega$ itself is
either 3 or Strassen's $\log_2 7 \simeq 2.81$. 

\smallskip\noindent{\bf 3.1.4.}  A useful application of modular
composition is the application of any power of the Frobenius map
$\pi$: given $x^q \bmod \frakp$, for any $\alpha$ in $\L$ and $i \in
\{-(n-1),\dots,n-1\}$, we can compute $\pi^i(\alpha)$ for $O(\log n)$
modular compositions, that is, in $(n^\theta \log q)^{1+o(1)}$ bit
operations. See for instance~\cite[Algorithm~5.2]{vonzurGathen1992}
or Section~2.2 in~\cite{eschost2017arXiv171200669D}.

For small values of $i$, say $i=O(1)$, the computation of
$\pi^i(\alpha)$ can also be done by repeated squaring, in $(n \log^2
q)^{1+o(1)}$ bit operations.  Since for all implementations we are
aware of, $\theta=(\omega+1)/2$, this approach may be preferred for
moderate values of $\log q$ (this also applies to the operation in the
next paragraph).

\smallskip\noindent{\bf 3.1.5.} The previous item implies that if
$\varphi=(g,\Delta)$ is a rank two Drinfeld module over $(\L,\gamma)$,
given $\alpha$ in $\L$, we can compute $\Phi_x(\alpha) = \gamma(x)
\alpha + g \pi(\alpha) + \Delta \pi^2(\alpha)$ in time $(n^{\theta}
\log q)^{1+o(1)}$.  Because of our requirements on $\theta$, the same
holds for the {\em transpose} of $\Phi_x$: given an $\F_q$-linear form
$\ell:\L\to \F_q$, with the convention of {\bf 3.1.1}, we can compute
the linear form $\Phi_x^\perp(\ell): \alpha\mapsto
\ell(\Phi_x(\alpha))$ for the same cost.


\subsection{Skew Polynomial Arithmetic}

\smallskip\noindent{\bf 3.2.1.}  We continue with skew polynomial
multiplication. This is an intricate question, with several algorithms
co-existing; which one is the most efficient depends on the input
degree. We will be concerned with multiplication in degree $k$, for
some $k \ll n$; in this case, the best algorithm to date is
from~\cite[Th.~7]{PUCHINGER2017b}. For any $k$, that algorithm uses
$O(k^{(\omega+1)/2})$ operations $+,\times$ in $\L$, together with
$O(k^{3/2})$ applications of powers of the Frobenius, for a total of
$(k^{(\omega+1)/2} n^\theta \log q)^{1+o(1)}$ bit operations.  For
higher degrees $k$, the algorithms in~\cite{CaLe17} have a better
runtime.

\smallskip\noindent{\bf 3.2.2.} Our next question is to compute
$\varphi_{x^k}$, for some $k \ge 0$; this polynomial has
$\Theta(k)$ coefficients in $\L$, so it uses $\Theta(k n \log q)$
bits. Since $\varphi_{x^{2k}}=\varphi_{x^{k}}\varphi_{x^{k}}$
and $\varphi_{x^{2k+1}} = \varphi_x\varphi_{x^{2k}}$, we can obtain
$\varphi_{x^k}$ from $\varphi_{x^{\floor*{k/2}}}$ using
$ (k^{(\omega+1)/2} n^\theta \log q)^{1+o(1)}$ bit operations.
The cumulated time to obtain $\varphi_{x^k}$ from $\varphi_x$ admits
the same upper bound.

\smallskip\noindent{\bf 3.2.3.} We consider now the cost of computing
$\varphi_C$, for some $C$ in $\F_q[x]$. To this end, we adapt the
divide-and-conquer algorithm
of~\cite[Ch.~9]{Gathen:2003:MCA:945759}, which applies to
commutative polynomials.
\begin{enumerate}
\item First, choose a power of two $k$ such that $k/2 \le \deg C <
  k$. We compute $\varphi_{x^i}$, for all $i$ powers of two up to
  $k/2$; using {\bf 3.2.2}, the cost is $(k^{(\omega+1)/2}
  n^\theta\log q)^{1+o(1)}$.
\item Write $C = C_0 + x^{k/2} C_1$, with $\deg C_0,\deg C_1 <
  k/2$. Compute recursively $\varphi_{C_0}$ and $\varphi_{C_1}$, and
  return $\varphi_C = \varphi_{C_0} + \varphi_{x^{k/2}}
  \varphi_{C_1}$.
\end{enumerate}
The cumulated cost of all recursive calls is
$(k^{(\omega+1)/2} n^\theta\log q)^{1+o(1)}$, which is $(\deg
(C)^{(\omega+1)/2} n^\theta\log q)^{1+o(1)}$.

\smallskip\noindent{\bf 3.2.4.}
Next, we analyze the cost of computing 
$\varphi_1,\varphi_x,\dots,\varphi_{x^k}$, for some $k \ge 0$. In
this, we essentially follow a procedure used by
Gekeler~\cite[Sec.~3]{frobdist}, although the cost analysis is not
done in that reference. These polynomials satisfy the following
recurrence:
\begin{align*}
 \varphi_{x^{i+1}} = \varphi_x \varphi_{x^i} & = (\gamma(x) + g\tau + \Delta \tau^2) \varphi_{x^i}.
\end{align*}
For $i \ge 0$, write 
\[\varphi_{x^i}  = \sum_{0 \le j \leq 2i} f_{i,j} \tau^{j},\]
for some coefficients $f_{i,j} \in \L$ to be determined. We obtain
\begin{align*}
 \sum_{0 \le j \leq 2i} f_{i,j} \tau^{j} = \sum_{0 \le j \leq 2i} \gamma(x) f_{i,j} \tau^{j} + \sum_{j \leq 2i} g f_{i,j}^q \tau^{j+1} + \sum_{j \leq 2i} \Delta f_{i,j}^{q^2} \tau^{j+2},
\end{align*}
so the $f_{i,j}$ satisfy the recurrence
\[ f_{i+1,j} = \gamma(x) f_{i,j} + g f_{i,j-1}^q + \Delta f_{i,j-2}^{q^2}\]
with known initial conditions $f_{0,0} = 1$, $f_{1,0} = \gamma(x)$,
$f_{1,1} = g$, and $f_{1,2} = \Delta$. Evaluating one instance of the
recurrence involves $O(1)$ multiplications / additions in $\L$ and
applications of the Frobenius map $\pi$, for $(n^\theta \log
q)^{1+o(1)}$ bit operations.  Given $\varphi_{x^i}$, there are
$\Theta(i)$ choices of $j$, so the overall cost to obtain
$\varphi_1,\varphi_x,\dots,\varphi_{x^k}$ is $(k^2 n^\theta \log
q)^{1+o(1)}$ bit operations.
In particular, taking $\theta = 1+\varepsilon$, we see that the
runtime here is essentially linear in the output size, which is
$\Theta(k^2 n \log q)$ bits; this was not the case for the algorithms 
in {\bf 3.2.1} - {\bf 3.2.2} - {\bf 3.2.3}.

However, in {\bf 3.1.3}, we pointed out that in practice, Brent and Kung's
modular composition algorithm is widely used, with
$\theta=(\omega+1)/2$. In this case, for moderate values of $\log q$,
one may use the straightforward repeated squaring method to apply the
Frobenius map; this leads to a runtime of $(k^2 n \log^2 q)^{1+o(1)}$
bit operations, which may be acceptable in practice. This also
applies to Proposition~\ref{prop:gek} below, and underlies the design
of the algorithm in Section~\ref{sec:mc}.

\smallskip\noindent{\bf 3.2.5.}  We deduce from this an algorithm for
inverting $\varphi$.  Given
$\varphi_C = \sum_{0 \le i \le 2k} \alpha_i \tau^i$, we want to
recover $C =\sum_{0 \le i \le k} c_ix^i $ in $\F_q[x]$. Writing the
expansion
\begin{align*}\label{eq:phic}
\varphi_C &= \sum_{0 \le i \le k} c_i \sum_{0 \le j \le 2i} f_{i,j} \tau^j = \sum_{0 \le j \le 2k}  \left (\sum_{\lfloor j/2\rfloor \le i \le k} c_i f_{i,j} \right) \tau^j.
\end{align*}
gives us $2k+1$ equations in $k+1$
unknowns. Keeping only those equations corresponding to even degree
coefficients leaves the following upper triangular system of $k + 1$
equations over $\L$,
\begin{equation}
\begin{bmatrix} f_{0,0} & f_{1,0} & \ldots & f_{k, 0} \\
                 0      & f_{1,2} & \ldots & f_{k, 2}  \\
                 
                 \vdots  & \vdots  &  \ddots      & \vdots                       \\
                 0  & 0 & \ldots & f_{k, 2k}
\end{bmatrix}
\begin{bmatrix}
  c_0 \\ c_1 \\ \vdots \\ c_k
\end{bmatrix} = \begin{bmatrix} \alpha_{0} \\ \alpha_{2} \\ \vdots \\ \alpha_{2k} \end{bmatrix}.
\end{equation}
Its diagonal entries are of the form $f_{i,2i}$; these are the
coefficients of the leading terms of $\varphi_{x^i}$, so that for all
$i$, $f_{i,2i} = \Delta^{e_i}$ for some exponent $e_i$. In particular,
since $\Delta \neq 0$, the diagonal terms are non-zero, which allows
us to find $c_0,\dots,c_{k}$. Once we know all $f_{i,j}$'s, the cost
for solving the system is $O(k^2)$ operations in $\L$, so the total is
$ (k^2 n^{\theta} \log q)^{1+o(1)}$ bit operations.

\smallskip\noindent{\bf 3.2.6.}  Finally, we give an algorithm to
evaluate a degree $k$ skew polynomial $U$ at $\mu$ elements
$\alpha_1,\dots,\alpha_\mu$ in $\L$. This algorithm will be used
only in Section~\ref{sec:narayanan}, so it can be skipped on first reading.

In the case of commutative
polynomials, one can compute all
$U(\alpha_i)$ faster than by successive evaluation of $U$ at
$\alpha_1$, $\alpha_2$, \dots;
see~\cite[Ch.~10]{Gathen:2003:MCA:945759}. The same holds for 
skew polynomial evaluation: in~\cite[Th.~15]{PUCHINGER2017b},
Puchinger and Wachter-Zeh gave an algorithm that uses
$O(k^{\max(\log_2 3, \omega_2/2)}\log k)$ operations in $\L$
(including Frobenius-powers applications) in the case $\mu=k$, where
$\omega_2 \le \omega+1$ as in {\bf 3.2.1}.

We propose a baby-step / giant-step procedure that applies
to any $\mu$ and $k$ (but the cost analysis depends on whether
$\mu \le \sqrt{k}$ or not). Suppose without loss of
generality that our input polynomial $U = u_0 + \cdots +
u_{k-1}\tau^{k-1}$ has degree less than $k$, for some perfect square
$k$, and let $s=\sqrt{k}$.
\begin{enumerate}
\item Commute powers of $\tau$ with the coefficients of $U$ to rewrite it as $U = U^*_{0} + \tau^s U^*_{1} + \cdots + \tau^{s(s-1)} U^*_{s-1}$.
  with all $U^*_i$ in $\L\ang{\tau}$ of degree less than $s$.
This is $O(k)$  applications of Frobenius powers in $\L$.
\item Compute $\alpha_{i,j}:=\pi^i(\alpha_j)$, for
  $i=0,\dots,s-1$ and $j=1,\dots,\mu$; this is $O(s \mu)$
applications of Frobenius powers.
\item For $i< s$, let $u^*_{i,0},\dots,u^*_{i,s-1}$ be 
the coefficients of $U^*_i$. Compute the matrix 
$(s,s) \times (s,\mu)$ product
\begin{align*}
\left [ \begin{matrix}
u^*_{0,0} & \cdots & u^*_{0,s-1} \\
\vdots && \vdots \\
u^*_{s-1,0} & \cdots & u^*_{s-1,s-1} 
  \end{matrix} \right ]
\left [ \begin{matrix}
\alpha_{0,1} & \cdots & \alpha_{0,\mu} \\
\vdots && \vdots \\
\alpha_{s-1,1} & \cdots & \alpha_{s-1,\mu} 
  \end{matrix} \right ],
\end{align*}
whose entries are the values $\beta_{i,j}:=U^*_{i}(\alpha_j)$.  When
we apply this result, we will have $\mu \le s = \sqrt{k}$, so the cost is $O(k
\mu^{\omega-2})$ operations in~$\L$, by~{\bf 3.1.2}. For completeness,
we mention that if $\mu \ge \sqrt{k}$, the cost is
$O(k^{(\omega-1)/2} \mu)$  operations in~$\L$.

\item Using Horner's scheme, for $j=1,\dots,\mu$, recover $U(\alpha_j)$ using 
$U(\alpha_j) = \beta_{0,j} + \tau^s( \beta_{1,j} + \tau^s(\beta_{2,j}
  + \cdots ))$. The total is another $O(s \mu)$ operations
in $\L$, including Frobenius powers.
\end{enumerate}
When $\mu \le \sqrt{k}$, the cost is $(k \mu^{\omega-2}
n^\theta \log q)^{1+ o(1)}$ bit operations. If we take $\mu\ge
\sqrt{k}$, the cost  becomes $(k^{(\omega-1)/2} \mu n^\theta \log
q)^{1+ o(1)}$.


\section{Previous Work on Problem~\ref{pb1}}\label{sec:prev}

Next, we briefly review existing algorithms for solving
Problem~\ref{pb1}, and comment on their runtime. Notation are still
from Section~\ref{ssec:not}.


\subsection{Gekeler's Algorithm}\label{ssec:gek}

As with elliptic curves, determining the Frobenius norm $B$
of Theorem~\ref{charpoly} is simply
done using the following result from~\cite[Th.~2.11]{frobdist}.

\begin{proposition}\label{frobnorm}
Let $N_{\L/\F_q}$ be the norm $\L \to \F_q$. The Frobenius norm $B$ of a rank two Drinfeld module $\varphi=(g,\Delta)$ 
  over $(\L,\gamma)$ is
  \[B = (-1)^n N_{\L/\F_q}(\Delta)^{-1}\frakp^m.\]
\end{proposition}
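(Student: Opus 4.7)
The plan is to realize $B$ as the determinant of the action of $\tau^n$ on $M := \L\ang{\tau}$, viewed as a free rank-$2$ module over the commutative ring $R := \L \otimes_{\F_q} \F_q[x]$. Left multiplication by $\L$ and right multiplication by $\varphi(\F_q[x])$ commute inside $\L\ang{\tau}$, so together they endow $M$ with an $R$-module structure. The defining identity $\Delta\tau^2 = \varphi_x - \gamma(x) - g\tau$ shows inductively that every $\tau^k$ lies in $R\cdot 1 + R\cdot \tau$, so $\{1,\tau\}$ spans $M$; elements of $R\cdot 1$ have even $\tau$-degree while those of $R\cdot \tau$ have odd $\tau$-degree, which gives $R$-independence. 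Because $\tau^n$ is central in $\L\ang{\tau}$ and $\alpha^{q^n} = \alpha$ for $\alpha\in\L$, left multiplication by $\tau^n$ is actually $R$-linear, so it has a well-defined determinant.

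Left multiplication by $\tau$ alone is $\phi$-semilinear, where $\phi$ is the entry-wise $q$-power Frobenius on $R$ (extended as the identity on the $\F_q[x]$-factor). In the basis $\{1,\tau\}$ its semilinear matrix is
\[
N = \begin{pmatrix} 0 & \Delta^{-1}(x-\gamma(x)) \\ 1 & -\Delta^{-1}g \end{pmatrix} \in \mathrm{Mat}_2(R),
\]
and the standard composition rule for semilinear maps gives the $R$-linear matrix of $\tau^n$ as $M_n = N \cdot N^{(q)} \cdot N^{(q^2)} \cdots N^{(q^{n-1})}$, where $N^{(q^i)}$ denotes entry-wise application of $\phi^i$; the identity $\phi^n = \mathrm{id}_R$ ensures that $M_n$ is indeed $R$-linear. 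The characteristic polynomial of $M_n$ is a monic degree-$2$ polynomial in $R[T]$ annihilating $\tau^n$; it must then equal the minimal polynomial $T^2 - AT + B \in \F_q[x][T] \subset R[T]$ of Theorem~\ref{charpoly} (provided this minimal polynomial remains of degree $2$ after base change to the fraction field of $R$; the degenerate case, corresponding to $\tau^n$ acting as a scalar, forces both roots of $T^2-AT+B$ to coincide and can be verified separately). In particular, $B = \det(M_n)$.

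Since $R$ is commutative and $\det$ commutes with the ring homomorphism $\phi$, one computes
\[
B = \prod_{i=0}^{n-1}\det(N)^{(q^i)} = \prod_{i=0}^{n-1}\bigl(-\Delta^{-q^i}(x-\gamma(x)^{q^i})\bigr) = (-1)^n\,\Delta^{-(1+q+\cdots+q^{n-1})}\,\prod_{i=0}^{n-1}(x - \gamma(x)^{q^i}),
\]
using $\det(N) = -\Delta^{-1}(x-\gamma(x))$. The exponent of $\Delta$ is exactly that in the definition $N_{\L/\F_q}(\Delta) = \Delta^{(q^n-1)/(q-1)}$, so this factor collapses to $N_{\L/\F_q}(\Delta)^{-1}$. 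For the remaining product, since $\gamma$ induces the embedding $\K = \F_q[x]/\frakp \hookrightarrow \L$, the Frobenius orbit of $\gamma(x)\in\L$ has size exactly $d = \deg\frakp$ with orbit polynomial $\prod_{i=0}^{d-1}(x-\gamma(x)^{q^i}) = \frakp$, and each of the $d$ conjugates appears with multiplicity $m = n/d$ among the $n$ factors. Therefore $\prod_{i=0}^{n-1}(x-\gamma(x)^{q^i}) = \frakp^m$, and combining everything yields $B = (-1)^n N_{\L/\F_q}(\Delta)^{-1}\frakp^m$ as claimed.

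The main delicate point is the identification $B = \det(M_n)$: one must match the characteristic polynomial of $\tau^n$ as an $R$-linear endomorphism of $M$ with the polynomial $T^2-AT+B$ of Theorem~\ref{charpoly}. Once that is in hand, the determinant unfolds mechanically into an orbit product that collapses to the desired formula.
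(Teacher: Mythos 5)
The paper does not give its own proof of this proposition: it is quoted directly from Gekeler (Th.~2.11 of the cited reference). Your argument is therefore an independent route, via what is essentially the Anderson motive: $\L\ang{\tau}$ as a free rank-two module over $R\cong\L[x]$ with basis $\{1,\tau\}$, $\tau$ acting $\phi$-semilinearly, and $B$ computed as $\det\bigl(N\,N^{(q)}\cdots N^{(q^{n-1})}\bigr)$. The computational core is correct: the bimodule structure, the freeness (one small imprecision: elements of $R\cdot 1$, resp.\ $R\cdot\tau$, are not supported purely in even, resp.\ odd, $\tau$-degrees --- e.g.\ $1\cdot\varphi_x$ contains $g\tau$ --- only their \emph{leading} degrees have the stated parity, which is what the independence argument actually needs), the semilinear composition rule, $\det N=-\Delta^{-1}(x-\gamma(x))$, the collapse of the $\Delta$-exponent to $N_{\L/\F_q}(\Delta)$, and the orbit product $\prod_{i=0}^{n-1}(x-\gamma(x)^{q^i})=\frakp^m$ are all fine, and the resulting formula agrees with the paper's worked example.

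The genuine gap is exactly the point you flag and then defer: the identification $B=\det(M_n)$. Your argument is that $\chi(T)=T^2-\mathrm{tr}(M_n)T+\det(M_n)$ and $T^2-AT+B$ both annihilate $L_{\tau^n}$ (the latter does, by centrality of $\tau^n$ --- a verification you leave implicit but which is easy), hence coincide when the minimal polynomial over $\mathrm{Frac}(R)$ has degree $2$. But the degenerate case is non-empty (e.g.\ $\frakp=x$, $\varphi_x=\tau^2$, $n$ even, where $\tau^n=\varphi_{x^{n/2}}$), and there ``can be verified separately'' cannot be carried out with the tools you permit yourself: Theorem~\ref{charpoly} does \emph{not} characterize $(A,B)$ in that case. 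Indeed, if $\tau^n=\varphi_C$ with $C\in\F_q[x]$, then $\tau^{2n}-\varphi_A\tau^n+\varphi_B=\varphi_{C^2-AC+B}$, so \emph{any} pair with $B=AC-C^2$ and the right degrees satisfies the displayed relation --- e.g.\ $(A,B)=(0,-C^2)$, which for odd $q$ differs from the characteristic polynomial $(T-C)^2$. So in the degenerate case, knowing that $T^2-AT+B$ annihilates a scalar operator says nothing about $B$, and your claim that the two roots of $T^2-AT+B$ must then coincide is an assertion about the characteristic polynomial as actually defined (via the Tate modules, as in Gekeler's work), not a consequence of anything stated in the paper. The fix is easy once you allow that definition: in the degenerate case the scalar $c$ automatically lies in $\F_q[x]$ (since $L_{\tau^n}(\tau)=\phi(c)\cdot\tau$ forces $c=\phi(c)$), so $\tau^n=\varphi_C$ acts on each Tate module as the scalar $C$, the characteristic polynomial is $(T-C)^2$, and $B=C^2=\det(M_n)$. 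With that supplement (or with a citation identifying $T^2-AT+B$ as the characteristic polynomial of $\tau^n$ on the motive, which is essentially the content of the result being proved), your proof is complete.
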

In particular, $B$ can be computed in $(n \log q)^{1+o(1)}$ bit
operations. Indeed, $\frakp^m$ is a degree $n$ polynomial, and we can
compute it in the prescribed time by repeated squaring. Moreover
$N_{\L/\F_q}(\Delta) =
\textnormal{resultant}(\frakf,\Delta)$~\cite{Pohst:1989:AAN:76692},
so we can compute it in the same time~\cite{Gathen:2003:MCA:945759}.

Gekeler also gave in \cite[Sec.~3]{frobdist} an algorithm that
determines the Frobenius trace $A$ by solving a linear system for the
coefficients of $A$. The key subroutines used in this algorithm were
described in the previous section, and imply the following result (the
cost analysis is not provided in the original paper).
\begin{proposition}\label{prop:gek}
  One can solve Problem~\ref{pb1} using $(n^{\theta+2}\log
  q + n \log^2 q)^{1+o(1)}$ bit operations.
\end{proposition}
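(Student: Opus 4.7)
The plan is to read off the skew polynomial $\varphi_A$ directly from the relation of Theorem~\ref{charpoly}, then invert $\varphi$ to recover $A\in\F_q[x]$. Rewriting the equation as $\varphi_A \tau^n = \tau^{2n}+\varphi_B$ and observing that right multiplication by $\tau^n$ in $\L\ang{\tau}$ is nothing more than a shift of coefficients by $n$ (since $\tau^n$ is central), one sees that once $\varphi_B$ is in hand, the coefficients of $\varphi_A$ are exactly the coefficients of degrees $n,n+1,\dots,2n$ in $\tau^{2n}+\varphi_B$. This reduces Problem~\ref{pb1} to the skew-polynomial inversion task analysed in {\bf 3.2.5}.

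Concretely, I would proceed as follows. First, precompute $x^q \bmod \frakp$ by repeated squaring at cost $(n\log^2 q)^{1+o(1)}$ (see {\bf 3.1.1}); this primitive underlies all Frobenius-powers applications. Second, compute $B$ via Proposition~\ref{frobnorm}, in $(n\log q)^{1+o(1)}$ bit operations. Third, construct $\varphi_B\in\L\ang{\tau}$, of $\tau$-degree $2n$, using the divide-and-conquer scheme of {\bf 3.2.3}, at cost $(n^{(\omega+1)/2+\theta}\log q)^{1+o(1)}$. Fourth, read off $\varphi_A$ from $\tau^{2n}+\varphi_B$; this step is essentially free.

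It remains to invert $\varphi$: from $\varphi_A$ (of $\tau$-degree at most $n$) we must recover $A\in\F_q[x]$ of commutative degree at most $n/2$. Following {\bf 3.2.5}, I would precompute the skew polynomials $\varphi_1,\varphi_x,\dots,\varphi_{x^{\lfloor n/2\rfloor}}$ via the Gekeler-style recurrence of {\bf 3.2.4}, costing $((n/2)^2 n^\theta\log q)^{1+o(1)} = (n^{\theta+2}\log q)^{1+o(1)}$ bit operations, and then solve the resulting upper-triangular linear system of size $\Theta(n)$ over $\L$ in $O(n^2)$ $\L$-operations, i.e.\ $(n^3 \log q)^{1+o(1)}$ bit operations. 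The dominant contribution is the precomputation of step {\bf 3.2.4}: since $\theta\ge 1$ we have $\theta+2\ge 3$, which absorbs the $(n^3\log q)^{1+o(1)}$ solver cost, and since $(\omega+1)/2<2$ it also absorbs the $(n^{(\omega+1)/2+\theta}\log q)^{1+o(1)}$ cost of forming $\varphi_B$. Adding the initial $(n\log^2 q)^{1+o(1)}$ for $x^q \bmod \frakp$ yields the claimed bound. There is no genuine obstacle here: every subroutine has already been analysed in Section~3, and the only care needed is the bookkeeping to confirm that the precomputation of the $\varphi_{x^i}$'s is indeed the bottleneck.
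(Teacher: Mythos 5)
Your proposal is correct and matches the paper's own proof in all essentials: compute $x^q \bmod \frakp$, compute $B$ and $\varphi_B$ (via {\bf 3.2.3}), read $\varphi_A$ off from $\varphi_A\tau^n=\tau^{2n}+\varphi_B$, and recover $A$ by the inversion of {\bf 3.2.5}, whose dominant cost is the $(n^{\theta+2}\log q)^{1+o(1)}$ precomputation of the $\varphi_{x^i}$'s by the recurrence of {\bf 3.2.4}. The only (harmless) differences are cosmetic: the paper writes the relation as $\tau^n\varphi_A=\tau^{2n}+\varphi_B$ (equivalent, since $\tau^n$ is central) and precomputes $\varphi_1,\dots,\varphi_{x^n}$ rather than stopping at $x^{\lfloor n/2\rfloor}$, which does not change the asymptotic bound.
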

\begin{proof}
The algorithm is as follows.
\begin{enumerate}
\item We compute $x^q\bmod \frakp$ with $(n \log^2
  q)^{1+o(1)}$ bit operations.
\item Find $\varphi_1,\dots,\varphi_{x^n}$ in
$(n^{\theta+2} \log q)^{1+o(1)}$ bit operations ({\bf 3.2.4}).
\item Compute $B$ and deduce $\varphi_B$; this takes comparatively
  negligible time (see above and {\bf 3.2.3}) and gives us
  $\varphi_A$, since Theorem~\ref{charpoly} implies that $\tau^n
  \varphi_A = \tau^{2n} + \varphi_B.$
\item Recover $A$ in $(n^{\theta+2} \log q)^{1+o(1)}$
bit operations by {\bf 3.2.5}. \qedhere
\end{enumerate}
\end{proof}
The cost of this procedure is at least cubic in $n$, due to the need
to compute the $\Theta(n^2)$ coefficients $f_{i,j}$ of
$\varphi_1,\dots,\varphi_{x^n}$ in $\L$.


\subsection{The Case $\L = \K$}

The case where $\L = \K$, that is, when $\gamma: \F_q[x] \to \L$ is
onto, allows for some faster algorithms, based on two observations:
we can recover $A$ from its image $\gamma(A)$ in this case (since
 $\deg A \le \floor*{n/2}$), and $\gamma(A)$ can be easily
derived from the {\em Hasse Invariant} of $\varphi$, which is the
coefficient of $\tau^{n}$ in $\varphi_{\frakp}={\frakp}(\varphi_x)$.

From this, Hsia and Yu~\cite{HsYu00} and Garai and
Papikian~\cite{GaPa18} sketched algorithms that compute $A$.  When
$\varphi_{\frakp}$ is computed in a direct manner, they take
$\Theta(n^2)$ additions, multiplications and Frobenius applications in
$\L$, so $\Omega(n^3)$ bit operations.

Gekeler~\cite[Prop.~3.7]{frobdist} gave an algorithm inspired by an
analogy with the elliptic case, where the Hasse invariant can be
computed as a suitable term in a recurrent sequence (with non-constant
coefficients). A direct application of this result does not improve on
the runtime above. However, using techniques inspired by both the
elliptic case~\cite{BoGaSc07} and the polynomial factorization
algorithm of~\cite{KaSh98}, it was shown
in~\cite{eschost2017arXiv171200669D} how to reduce the cost to
$(n^{\theta+1/2} \log q + n \log^2 q)^{1+o(1)}$ bit operations, which
is subquadratic in $n$.


\section{On Narayanan's Algorithm}\label{sec:narayanan}

In~\cite[Sec.~3.1]{Narayanan18}, Narayanan gives the sketch of a Monte
Carlo algorithm to solve Problem~\ref{pb1} for odd $q$, which applies to those
Drinfeld modules $(g,\Delta)$ for which the minimal polynomial $\Gamma$ of
$\Phi_x=\gamma(x){\rm Id} + g \pi + \Delta \pi^2$ has degree $n$. In
this case, it must coincide with the characteristic polynomial of
$\Phi_x$, which we saw is equal to $1-A+B$ (this assumption on
$\Gamma$ holds for more than half of elements of the parameter
domain~\cite[Th.~3.6]{Narayanan18}). Since $B$ is easy to compute,
knowing $1-A+B$ gives us $A$ readily.

Narayanan's algorithm computes the minimal polynomial
$\Gamma_{\ell,\alpha}$ of a sequence of the form $(r_k)_{k \ge 0} =
(\ell(\Phi_x^k(\alpha))_{k\ge 0} \in {\F_q}^\N$, for a random $\F_q$-linear
map $\ell:\L\to\F_q$ and a random $\alpha \in \L$. Using Wiedemann's
analysis~\cite{Wiedemann:1986:SSL:13738.13744}, one can bound below
the fraction of $\ell$ and $\alpha$ for which
$\Gamma_{\ell,\alpha}=\Gamma$.
The bottleneck of this algorithm is the computation of sufficiently
many elements of the above sequence: the first $2n$ terms are needed,
after which applying Berlekamp-Massey's algorithm gives us
$\Gamma_{\ell,\alpha}$. To compute $(r_k)_{0 \le k < 2n}$, Narayanan states
that we can adapt the {\em automorphism projection} algorithm of
Kaltofen and Shoup~\cite{KaSh98} and enjoy its subquadratic
complexity. Indeed, Kaltofen and Shoup's algorithm computes terms in a similar
sequence, namely $\ell(\pi^k(\alpha))_{k\ge 0}$, where $\pi$ is 
the Frobenius map. However, that
algorithm actively uses the fact that $\pi$ is a field automorphism, whereas
$\Phi_x$ is not. Hence, whether a direct adaptation of Kaltofen
and Shoup's algorithm is possible remains unclear to us.

We propose an alternative Monte Carlo algorithm, which
establishes the first point in Theorem~\ref{theo:main}; it is
inspired by Coppersmith's block Wiedemann
algorithm~\cite{Coppersmith94}. 

The sequence $(\ell(\Phi_x^k(\alpha))_{k\ge 0}$ used in Wiedemann's
algorithm is linearly recurrent, so that its generating series is
rational, with $\Gamma$ as denominator for generic choices of $\ell$
and $\alpha$. In Coppersmith's block version, we consider a sequence
of $\mu \times \mu$ matrices $({\bm R}_k)_{k \ge 0}$ over $\F_q$
instead, for some given parameter $\mu$. These matrices are defined by
choosing $\mu$ many $\F_q$-linear mappings $\L\to\F_q$, say ${\bm
  \ell}=(\ell_1,\dots,\ell_\mu)$, and $\mu$ elements ${\bm
  \alpha}=(\alpha_1,\dots,\alpha_\mu)$ in $\L$.  They define sequences
$(r_{i,j,k})_{k \ge 0} := (\ell_i(\Phi_x^k(\alpha_j)))_{k \ge 0}$,
which form the entries of a sequence of $\mu \times \mu$ matrices
$({\bm R}_k)_{k \ge 0}$.  The generating series $\sum_{k \ge 0} {\bm
  R}_k / z^{k+1}$ can be written as ${\bm Q}^{-1}{\bm N}$, for some
${\bm Q}$ and ${\bm N}$ in $\F_q[z]^{\mu \times \mu}$.  For generic
choices of ${\bm \ell}$ and ${\bm \alpha}$, ${\bm Q}$ has degree at
most $\lceil n/\mu \rceil$ and can be computed in $(\mu^{\omega-1} n \log
q)^{1+o(1)}$ bit operations from $({\bm R}_k)_{k \le 2n/\mu}$, using
the PM basis algorithm of~\cite{GiJeVi03}. Finally, we will see that
we can deduce the minimal polynomial $\Gamma$ from the determinant of
${\bm Q}$.

Thus, Coppersmith's algorithm requires fewer values of the matrix
sequence than Wiedemann's (roughly $2n/\mu$ instead of $2n$). As we
will see, the multipoint evaluation algorithm in {\bf 3.2.6} makes it
possible to compute all required matrices in subquadratic time.
The overview of the algorithm is thus the following.
\vspace{-0.075cm}
\begin{enumerate}
\item Fix $\mu=\lfloor n^{b} \rfloor$, for some exponent $b$ to be
  determined later; choose $\mu$ many $\F_q$-linear mappings $\L\to\F_q$,
 ${\bm \ell}=(\ell_1,\dots,\ell_\mu)$, and $\mu$ elements ${\bm
    \alpha}=(\alpha_1,\dots,\alpha_\mu)$ in $\L$.  
\item Compute $({\bm R}_k)_{0 \le k \le 2n/\mu}$, for ${\bm R}_k$ as
  defined above. We will discuss the cost of this operation below.
\item Compute ${\bm Q}$; this takes $(\mu^{\omega-1} n \log q)^{1+o(1)}$ bit operations.
\item Compute the determinant $\Gamma^*$ of ${\bm Q}$.  The cost of
  this step is another $(\mu^{\omega-1} n \log q)^{1+o(1)}$ bit
  operations~\cite{LaNeZh17}.  By~\cite[Th.~2.12]{KaVi04}, $\Gamma^*$
  divides the characteristic polynomial of $\Phi_x$, which we assume
  coincides with $\Gamma$. For generic $\ell_1$ and
  $\alpha_1$, the minimal polynomial of $(r_{1,1,k})_{k \ge 0}$ is $\Gamma$. If this is the case, since $\Gamma^*$ cancels
  that sequence, $\Gamma$ divides $\Gamma^*$, so that $\Gamma =
  \Gamma^*$.

\end{enumerate}

Regarding the probabilistic aspects, combining the last paragraphs
of~\cite[Sec.~2.1]{KaVi04} (that deal with the properties of ${\bm
  Q}$) and the analysis
in~\cite{Kaltofen:1991:PEP:113379.113396,Kaltofen-saun:1991:WMS:646027.676885}
(for Step 4) shows that there is a non-zero polynomial $D$ in
$\F_q[{\bm L}_1,\dots,{\bm L}_\mu,{\bm A}_1,\dots,{\bm A}_\mu]$, where
each boldface symbol is a vector of $n$ indeterminates, such that
$\deg D \le 4n$, and such that if
$D({\ell}_1,\dots,{\ell}_\mu,{\alpha}_1,\dots,{\alpha}_\mu)\ne 0$, all
properties above hold. By the DeMillo-Lipton-Zippel-Schwartz lemma,
the probability of failure is thus at most $4n/q$. If $q < 4n$, we may
have to choose the coefficients of ${\bm \ell}$ and ${\bm \alpha}$ in
an extension of $\F_q$ of degree $O(\log n)$; this affects the runtime
only with respect to logarithmic factors.

It remains to explain how to compute the required matrix values $({\bm
  R}_k)_{k \le 2n/\mu}$ at step (2). This is done by adapting the
baby-steps / giant steps techniques of~\cite[Algorithm {\bf
    AP}]{KaSh98} to the context of the block-Wiedemann algorithm, and
leveraging multipoint evaluation.  Let $K:=\floor*{ (n/2\mu)^{c}}$,
for another constant $c$ to be determined, and $K':=\lceil n/(2K\mu )
\rceil$; remark that $K' \mu\le n$.  For our final choices of
parameters, we will also have the inequalities $K' \le K$, $\mu \le
\sqrt{K}$.
\begin{enumerate}
\item[(2.1)] For $i \le \mu$ and $u < K$, compute the linear mapping
  $\ell_{i,u}:= {\Phi_x^\perp}^u(\ell_i)$, so that $\ell_{i,u}(\beta)
  = \ell_i( \Phi_x^u(\beta) )$ for $\beta$ in $\L$. By {\bf 3.1.5},
  this takes $ (K \mu n^{\theta} \log q)^{1+o(1)}$ bit
  operations.

\item[(2.2)] Compute $\varphi_{x^K} \in \L\ang{\tau}$; 
  this takes $(K^{(\omega+1)/2}n^\theta \log q)^{1+o(1)}$ bit
  operations, by {\bf 3.2.2}.

\item[(2.3)] For $j \le \mu$ and $v < K'$, compute $\alpha_{j,v} :=
  \Phi_x^{Kv}(\alpha_j)$, so that we have
  $\ell_{i,u}(\alpha_{j,v}) = \ell_i( \Phi_x^{u+Kv}(\alpha_j) )$ for
  all $i,j,u,v$.

Starting from $(\alpha_{1,v},\dots,\alpha_{\mu,v})$, the application
of $\varphi_{x^K}$ gives
$(\alpha_{1,v+1},\dots,\alpha_{\mu,v+1})$. This takes $ (K
\mu^{\omega-2} n^{\theta} \log q)^{1+o(1)}$ bit operations per index
$v$ (by {\bf 3.2.6}), so that the total cost is $(\mu^{\omega-3} n^{\theta+1}
\log q)^{1+o(1)}$ (note that $\mu \le \sqrt{K}$).

\item[(2.4)] Multiply the $(K\mu,n) \times (n,K'\mu)$ matrices with
  entries the coefficients of $(\ell_{1,0},\dots,\ell_{\mu,K-1})$,
  resp.\, $(\alpha_{1,0},\dots,\alpha_{\mu,K'-1})$, to obtain all
  needed values $r_{i,j,u+Kv}$.  The inequalities above imply that the
  smallest dimension is $K'\mu$ so by {\bf 3.1.2}, the cost is $(K^{3-\omega}\mu n^{\omega-1} \log q)^{1 + o(1)}$ bit operations.
\end{enumerate}
We know that we can take $\theta=1+\varepsilon$, for any $\varepsilon
> 0$. To find $b$ and $c$ that minimize the overall exponent in $n$, we can
thus replace $\theta$ by $1$ and disregard the exponent $1+o(1)$ and
the terms depending on $\log q$; we will then round up the final
result.  The relevant terms are $\{K\mu n,
K^{(\omega+1)/2}n, \mu^{\omega-3} n^2,  K^{3-\omega}\mu n^{\omega-1},
\mu^{\omega-1} n\}$.  For $\omega = 2.373$, taking $b = 0.183$ and
$c=0.642$, all inequalities we needed are satisfied and the runtime is
$(n^{1.885} \log q)^{1 + o(1)}$ bit operations. 

Taking into account the initial cost of computing $x^q$ in $\L$, this
proves the first point in our main theorem. It should however be
obvious from the presentation of the algorithm that we make no claims
as to its practical behavior (for instance, parameters $b,c$ were
determined using an exponent $2.373$ for matrix multiplication,
which is currently unrealistic in practice).


\section{A Deterministic Algorithm}\label{sec:schoof}

We present next an alternative approach inspired by Schoof's algorithm
for elliptic curves, establishing the second item in our main theorem:
{\em we can solve Problem~\ref{pb1} in time $(n^{2+\varepsilon} \log q + n
\log^2 q)^{1+o(1)}$, for any $\varepsilon > 0$.} As before, we assume 
that we know $x^q \bmod \frakp$.

\smallskip\noindent{\bf 6.1.} We first compute the Frobenius norm
$B$. The idea of the algorithm is then to compute $A_i:=A \bmod E_i$,
for some pairwise distinct irreducible polynomials $E_1,\dots,E_s$ in
$\F_q[x]$ and recover $A$ by Chinese remaindering.  Thus, we
need $\deg(E_1 \cdots E_s) > n/2$, and we will also impose that $\deg
E_i \in O(\log n)$ for all $i$. First, we show that we can find such
 $E_i$'s in $(n^2 \log q)^{1+o(1)}$ bit operations.

If $q > n/2$, it is enough to take $E_i = x-e_i$, for pairwise
distinct elements $e_i$ in $\F_q$; enumerating $n/2+1$ elements of
$\F_q$ takes $(n \log q)^{1+o(1)}$ bit operations.

Otherwise, let $t= \lceil \log_q (n+1)\rceil$. The sum of the degrees
of the monic irreducible polynomials of degree $t$ over $\F_q$ is at
least $(1/2) q^t$, which is greater than $n/2$. Thus, we test all monic
polynomials of degree $t$ for irreducibility. There are $q^t < q(n+1)
\le n^2$ such polynomials (note that here $q \le n/2$) and each
irreducibility test takes $\log^{O(1)} n$ bit
operations~\cite{vonzurGathen1992} (a term $\log q$ usually appears in
such runtime estimates, but here $\log q$ is in $O(\log n)$).

Without loss of generality, we assume that no polynomial $E_i$ is such
that $E_i(\gamma(x))=0$ (recall that $\gamma$ is the structural
homomorphism $\F_q[x] \to \L$). Only one irreducible polynomial may
satisfy this equality, so we discard it and find a replacement if
needed.

\smallskip\noindent{\bf 6.2.} Let $F \in\L\ang{\tau}$ be of degree
$\delta$ and $\L\ang{\tau}_{\delta}$ be the set of all elements in
$\L\ang{\tau}$ of degree less than $\delta$. Our main algorithm will
rely on the following operation: define the operator $\text{\sf
  T}:\L\ang{\tau}_{\delta} \to \L\ang{\tau}_{\delta}$ by $\text{\sf
  T}(U) := \tau U \bmod F$.  We are interested in computing $\text{\sf
  T}^r(U)$, for some $r \ge 0$ and $U$ in
$\L\ang{\tau}_{\delta}$.

The operator $\text{\sf T}$ is $\F_q$-linear but not $\L$-linear;
the coefficient vector of $\text{\sf T}(U)$ is ${\bm M}\, \pi({\bm
  v}_U)$, where ${\bm M}$ is the companion matrix of $F$ (seen as a
commutative polynomial), ${\bm v}_U$ is the coefficient vector of $U$
and where we still denote by $\pi$ the entry-wise application of the
Frobenius to a vector (or to a matrix). As a result, the coefficient
vector of $\text{\sf T}^r(U)$ is ${\bm M}\, \pi({\bm M}) \cdots
\pi^{r-1}({\bm M})\, \pi^r({\bm v}_U)$.

Lemma~5.3 in~\cite{vonzurGathen1992} shows how to compute such an
expression in $O(\log r)$ applications of Frobenius powers (to
matrices) and matrix products (the original reference deals with
scalars, but there is no difference in the matrix case). When $r$ is
$O(n)$, the runtime is $(\delta^3 n^\theta \log q)^{1+o(1)}$ bit
operations ($\delta$ will be small later on, so there is no need
to use fast matrix arithmetic).

\smallskip\noindent{\bf 6.3.} We will also have to invert $\text{\sf
  T}$.  In order to be able do so, we assume that the constant
coefficient of $F$ is non-zero; as a result, ${\bm M}$ is invertible.
Given $V=\text{\sf T}(U)$, we can recover the coefficient vector
of $U=\text{\sf T}^{-1}(V)$ as ${\bm N}\, \pi^{-1}({\bm v}_V)$, where
${\bm N}=\pi^{-1}({\bm M}^{-1})$. For $r$ in $O(n)$, we can compute
$\text{\sf T}^{-r}(V)$ in $(\delta^3 n^\theta \log q)^{1+o(1)}$ bit
operations as well, replacing the applications of powers of $\pi$ by
powers of $\pi^{-1}$.

\smallskip\noindent{\bf 6.4.} Using the results in {\bf 6.2} and {\bf
  6.3}, let us show how to compute $A \bmod E$, for some irreducible
$E$ in $\F_q[x]$. As input, assume that we know $E$ and $B \bmod
E$. We let $F=\varphi_E \in \L\ang{\tau}$ and $\delta := \deg F = 2
\deg E$. We suppose that $E(\gamma(x)) \ne 0$; as a result, the
constant coefficient of $F$ is non-zero, so {\bf 6.3} applies.

Start from the characteristic equation $\tau^{2n} - \tau^n \varphi_A +
\varphi_B=0$, which we rewrite as $\tau^n \varphi_A = \tau^{2n} +
\varphi_B$ and reduce both sides modulo $F=\varphi_E$. On the left, we
obtain $(\tau^n \varphi_A) \bmod F = \text{\sf T}^n(\varphi_A \bmod
F)$, that is, $\text{\sf T}^n(\varphi_{A \bmod E})$. Similarly, on the
right, we obtain $\text{\sf T}^{2n}(1) + \varphi_{B \bmod E}.$ Thus, we can
proceed as follows:
\begin{enumerate}
\item Compute $F:=\varphi_E$ and $V_0:=\varphi_{B \bmod E}$. By
  {\bf 3.2.3}, the cost is
  $(\delta^{(\omega+1)/2} n^\theta \log q)^{1+o(1)}$ bit operations.
\item Compute the companion matrix ${\bm M}$ of $F$ in 
  $(\delta^2 n \log q)^{1+o(1)}$ bit operations.
\item Compute $V_1:=\text{\sf T}^{2n}(1)$ in $(\delta^3 n^\theta \log
  q)^{1+o(1)}$ bit operations ({\bf 6.2}).
\item Compute $\varphi_{A \bmod E}=\text{\sf T}^{-n}(V_0+V_1)$ in $(\delta^3
  n^\theta \log q)^{1+o(1)}$ bit operations ({\bf 6.3}).
\item Deduce $A \bmod E$ in $(\delta^2 n^\theta \log q)^{1+o(1)}$ bit
  operations ({\bf 3.2.5}).
\end{enumerate}
The overall runtime is $(\delta^3 n^\theta \log q)^{1+o(1)}$ bit
operations.

\smallskip\noindent{\bf 6.5.} We can finally present the
whole algorithm. 
\begin{enumerate}
\item Compute the Frobenius norm $B$ (Proposition~\ref{frobnorm})
\item Compute polynomials $E_1,\dots,E_s$ as in {\bf 6.1}.
\item For $i=1,\dots,s$, compute $B_i:=B \bmod E_i$.
\item For $i=1,\dots,s$, compute $A_i:=A \bmod E_i$ by {\bf 6.4}.
\item Recover $A$ by the Chinese remainder map.
\end{enumerate}
Steps (1), (2), (3) and (5) take a total of $(n^2 \log q)^{1+o(1)}$
bit operations. Since the degrees of all polynomials $E_i$ are $O(\log
n)$, the time spent at Step (4) is $(n^{\theta+1} \log
q)^{1+o(1)}$ bit operations. Since we can take $\theta = 1+\varepsilon$ 
for any $\varepsilon > 0$, and adding the cost
$(n \log^2 q)^{1+o(1)}$
 of computing $x^q \bmod \frakp$, 
this establishes the second statement in Theorem~\ref{theo:main}.


\section{A Monte Carlo Algorithm}\label{sec:mc}

We now prove the last item in our main theorem: {\em
  there exists a Monte Carlo algorithm that solves Problem~\ref{pb1}
  in $(n^2 \log^2 q)^{1+o(1)}$ bit operations.} The runtime is now
quadratic in $\log q$, but truly quadratic in $n$, not of the form
$n^{2+\varepsilon}$. The point is that we avoid applying high
powers of the Frobenius (and thus modular composition); the
applications of $\Phi_x= \gamma(x){\rm Id} + g \pi + \Delta \pi^2$ are
done by repeated squaring.  This algorithm behaves well in practice,
whereas the behavior of modular composition significantly
hinders the implementation of the algorithms in the previous sections;
see {\bf 3.1.3} and {\bf 3.2.4}.

The algorithm is inspired by~\cite[Th.~5]{Shoup94}; it bears
similarities with Narayanan's, but does not require the assumption
that the minimal polynomial $\Gamma$ of $\Phi_x= \gamma(x){\rm Id} + g \pi + \Delta \pi^2$ have degree
$n$. Whether the subquadratic runtime obtained in
Section~\ref{sec:narayanan} can be carried over to the approach
presented here is of course an interesting question.

\smallskip\noindent{\bf 7.1.} When $n$ is even, we may need to
determine the leading coefficient $a_{n/2}$ of the Frobenius trace $A$
separately. We will use the following result, due to
Jung~\cite{Jung00,frobdist}:
\[a_{n/2} = {\rm Tr}_{\F_{q^2}/\F_{q}}({\rm N}_{\L/\F_{q^2}}(\Delta)^{-1}),\]
where $\F_{q^2}$ is the unique degree 2 extension of $\F_q$ contained
in $\L$, and ${\rm Tr}$ and ${\rm N}$ are (finite field) trace and
norm. Using repeated squaring for exponentation, $a_{n/2}$ can be
computed in $(n^2 \log q)^{1+o(1)}$ operations in $\F_q$, so $(n^2
  \log^2 q)^{1+o(1)}$ bit operations.

\smallskip\noindent{\bf 7.2.} Let $\Gamma \in \F_q[x]$ be the minimal
polynomial of $\Phi_x$ and let $\nu \le n$ its degree. We prove here 
that the inequality $\nu\ge n/2$ holds.

For any positive integers $i,j$ with $0 \le i < j < n$, $\pi^i \ne
\pi^j$. Therefore, by independence of characters, ${\rm Id}, \pi,
\ldots, \pi^{n-1}$ satisfy no non-trivial $\L$-linear relation;
that is, there are no constants $c_0, \ldots, c_{n-1}$ in $\L$,
with at least one $c_i \neq 0$, such that $c_0 + c_1 \pi + \ldots +
c_{n-1}\pi^{n-1}=0$ in ${\rm End}_{\F_q}[\L]$.

Assume by way of contradiction that $2\nu \le n-1$.  We know that
$\Gamma(\Phi_x) = 0$; since $\Gamma$ has degree $\nu$, we may write
is as $\Gamma = c_0 + \cdots + c_{\nu-1} x^{\nu-1} +
x^\nu$. Evaluating at $\Phi_x = \gamma(x) {\rm Id} + g \pi + \Delta
\pi^2$, we obtain a relation of the form $\bar c_0 {\rm Id} + \bar
c_1 \pi + \cdots + \bar c_{2\nu} \pi^{2\nu} = 0$ with
coefficients in $\L$, where all exponents are at most $n-1$. The
leading coefficient $\bar c_{2\nu}$ is given by $\bar c_{2\nu} =
\Delta^{(1-q^{2\nu})/(1-q)}$, so it is non-zero, a contradiction.
Thus, $2\nu \ge n$, as claimed.

\smallskip\noindent{\bf 7.3.} The first step in the algorithm computes
the minimal polynomial $\Gamma$ of $\Phi_x$. 
To do so, choose at random $\alpha$ in $\L$ and an $\F_q$-linear
projection map  $\ell: \L \to \F_q$. The sequence
$(\ell(\Phi_x^i(\alpha)))_{i \ge 0}$ is linearly generated, and its
minimal polynomial $\Gamma_{\ell,\alpha}$ divides $\Gamma$. Given $2n$
entries in the sequence $\ell(\Phi_x^i(\alpha))$, we apply the
Berlekamp-Massey algorithm to obtain $\Gamma_{\ell,\alpha}$.

Assuming that $\ell$ and $\alpha$ are chosen uniformly at random,
Wiedemann proved~\cite{Wiedemann:1986:SSL:13738.13744} that the
probability that $\Gamma_{\ell,\alpha}=\Gamma$ is at least $1/(12
\max(1, \log_q \nu))$. Using the DeMillo-Lipton-Zippel-Schwartz lemma
gives another lower bound for the probability that
$\Gamma_{\ell,\alpha}$ equals $\Gamma$, namely
$1-2n/q$~\cite{Kaltofen:1991:PEP:113379.113396,Kaltofen-saun:1991:WMS:646027.676885}. We
will assume henceforth that this is the case
(as in Section~\ref{sec:narayanan}, we can work over an extension 
field of $\F_q$ of degree $O(\log n)$ if $q < n$).

\smallskip\noindent{\bf 7.4.}
We start from $A = \sum_{i = 0}^{\floor*{n/2}}a_ix^i \in \F_q[x]$, for
some unknown coefficients $a_i$. Since $n/2 \le \nu$ (by {\bf 7.2}), we must have 
$\floor*{n/2} \le \nu -1$, except if $n$ is even and $n/2 = \nu$.
Hence, we may rewrite $A$ as 
$$A = \sum_{i = 0}^{\nu - 1}a_ix^i + a_{\nu}x^{\nu},$$ where $a_i = 0$
for $i=\floor*{n/2}+1,\dots,\nu-1$ and either $a_\nu=0$ (if
$\floor*{n/2} \le \nu -1$) or $a_\nu$ can be determined as in {\bf
  7.1} (if $\floor*{n/2} = \nu$). In any case, $a_\nu$ is known.

Theorem~\ref{charpoly} implies that for $\alpha$ as above, we have $\Phi_A(\alpha)=r$ with $ r:=\alpha + \Phi_B(\alpha) \in \L$. Using the expression of $A$ given above, this yields
\[ \sum_{i = 0}^{\nu - 1}a_i \Phi_{x^{i}}(\alpha) = \tilde r,\]
with $\tilde r =  r-a_\nu \Phi_\nu(\alpha)$.
For $j \ge 0$, applying $\Phi_{x^j}$ to this equality
gives
\[ \sum_{i = 0}^{\nu - 1}a_i \Phi_{x^{i+j}}(\alpha) = \Phi_{x^j}(\tilde r).\]
Finally, we can apply $\ell$ to both sides of such equalities,
for $j=0,\dots,\nu-1$.
This yields the following Hankel system:
\begin{align}\label{eq:A}
  \begin{bmatrix}   \ell(\alpha) & \ldots & \ell(\Phi_{x^{\nu-1}}(\alpha)) \\
    \vdots & & \vdots  \\ 
  \ell(\Phi_{x^{\nu-1}}(\alpha)) &  \ldots & \ell(\Phi_{x^{2\nu-2}}(\alpha))
\end{bmatrix} 
\begin{bmatrix} a_0  \\ \vdots \\ a_{\nu-1} \end{bmatrix} 
= 
\begin{bmatrix} \ell(\tilde r) \\ \vdots \\   \ell(\Phi_{x^{\nu-1}}(\tilde r)) \end{bmatrix}. 
\end{align}
Since we assumed that $\Gamma_{\ell,\alpha}=\Gamma$, applying for
instance Lemma~1 in~\cite{Kaltofen:1991:PEP:113379.113396}, we deduce 
that the matrix of the system is invertible, allowing us to
recover $a_0,\dots,a_{\nu-1}$.

\smallskip\noindent{\bf 7.5.} We can now summarize the algorithm and
analyze its runtime.
\begin{enumerate}
\item Compute the Frobenius norm $B=\sum_{i \le n} b_i x^i$ (Proposition~\ref{frobnorm});
this takes $(n \log q)^{1+o(1)}$ bit operations.
\item Compute the sequence $(\Phi_{x^{i}}(\alpha))_{i < 2n}$ using the
  recurrence relation $\Phi_{x^{i+1}}(\alpha) = (\gamma(x) {\rm Id} +
  g\pi + \Delta \pi^2)(\Phi_{x^i}(\alpha))$. Using repeated squaring to
  apply the Frobenius, we get all terms in $(n^2 \log^2 q)^{1+o(1)}$ bit
  operations.
\item Apply $\ell$ to all terms of the sequence and deduce
  $\Gamma_{\ell,\alpha}$ by the Berlekamp-Massey algorithm. This takes
  $(n^2 \log q)^{1+o(1)}$ bit operations. We assume
  $\Gamma_{\ell,\alpha}=\Gamma$ and let $\nu$ be its degree.
\item If $n$ is even and $\nu=n/2$, compute $a_\nu$ as in {\bf 7.1};
  otherwise, set $a_\nu=0$. This takes $(n^2 \log^2 q)^{1+o(1)}$ bit
  operations.
\item Compute $\tilde r = \alpha + \sum_{i \le n} b_i
  \Phi_{x^{i}}(\alpha) - a_\nu \Phi_\nu(\alpha)$; this takes $(n^2
  \log^2 q)^{1+o(1)}$ bit operations.
\item Compute the sequence $(\Phi_{x^{i}}(\tilde r))_{i < \nu}$
  and apply $\ell$ to all entries in this sequence. As above, this takes
 $(n^2 \log^2 q)^{1+o(1)}$ bit
  operations.
\item Solve~\eqref{eq:A}; since the matrix is Hankel and non-singular,
  this takes $(n^2 \log q)^{1+o(1)}$ bit operations.
\end{enumerate}
Altogether, this takes  $(n^2 \log^2 q)^{1+o(1)}$ bit operations,
as claimed.


\section{Experimental Results}\label{sec:experiments}

In support of our theoretical analysis, the algorithms presented in
sections 6 and 7, as well as Gekeler's algorithm
in~\cite[Section~3]{frobdist}, were implemented in C++ using Shoup's
NTL library~\cite{shoup2001ntl}; our implementation currently supports
 prime $q$.  When $m=1$, we also compare our implementation
with that of the algorithm in~\cite{eschost2017arXiv171200669D}.

Table \ref{tab:table-sample} provides sample runtimes for several
parameters. Figure 1 is made up of 24 data points for $q = 499$, $m =
2$, and varied $n$, averaged over 4 runs. The randomized algorithm of
section 7 demonstrated a significant runtime advantage over both
Gekeler's original algorithm and the deterministic alternative. Due to
its heavy dependency on modular composition, and a lack of readily
available implementations of the Kedlaya-Umans algorithm on which we
rely, the deterministic algorithm demonstrates a
significantly higher complexity than expected. For $m=1$, as predicted by the cost 
analysis, the algorithm in~\cite{eschost2017arXiv171200669D} is overall
the fastest.

The code used to generate these results is publicly available at \url{https://github.com/ymusleh/Drinfeld-paper/tree/master/code}.

\begin{center}
\begin{tabular}{ | m{1.8cm} | m{1.5cm}| m{1.6cm} | m{0.9cm} | m{1cm} | } 
\hline
 & Randomized & Deterministic & Gekeler & Hasse~\cite{eschost2017arXiv171200669D}  \\ 
\hline
$q = 571$, $n=32$, $m = 1$ & 0.065341 &	1.19282 &	0.37291 & 0.02087\\
\hline 
$q = 571$, $n=32$, $m = 4$  & 0.060729 & 1.17222 & 0.38341 & - \\ 
\hline
$q = 850853$, $n = 64$, $m=1$ & 0.598883 & 25.4512 & 8.97046 & 0.12814  \\
\hline 
$q = 850853$, $n = 64$, $m=8$ &0.615858 & 25.6425 & 9.12075 & - \\
\hline
\end{tabular}
\end{center}
\begin{table}[h!]
  \centering
  \caption{Various parameter test cases; time in seconds. }
  \label{tab:table-sample}
\end{table}

\vspace{-1cm}

\begin{figure}[h!]\label{fig:ntest499}
\centering
  \includegraphics[width=3in]{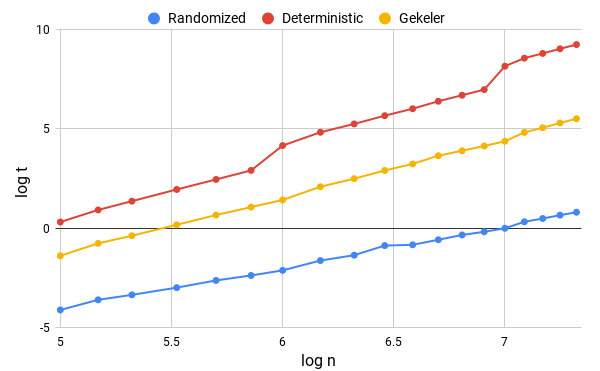}
  \caption{Log-log plot of $n$ versus runtime with $q = 499$, $m = 2$}
\end{figure}



\begin{acks}
  We wish to thank Jason Bell and Mark Giesbrecht for their comments
  on Musleh's MMath thesis \cite{Musleh}, which is the basis of this work, and Anand
Kumar  Narayanan for answering many of our questions. Schost was supported
  by an NSERC Discovery Grant.
\end{acks}


\bibliographystyle{ACM-Reference-Format}
\bibliography{drinfeld_charpoly}


\begin{thebibliography}{39}


\ifx \showCODEN    \undefined \def \showCODEN     #1{\unskip}     \fi
\ifx \showDOI      \undefined \def \showDOI       #1{#1}\fi
\ifx \showISBNx    \undefined \def \showISBNx     #1{\unskip}     \fi
\ifx \showISBNxiii \undefined \def \showISBNxiii  #1{\unskip}     \fi
\ifx \showISSN     \undefined \def \showISSN      #1{\unskip}     \fi
\ifx \showLCCN     \undefined \def \showLCCN      #1{\unskip}     \fi
\ifx \shownote     \undefined \def \shownote      #1{#1}          \fi
\ifx \showarticletitle \undefined \def \showarticletitle #1{#1}   \fi
\ifx \showURL      \undefined \def \showURL       {\relax}        \fi
\providecommand\bibfield[2]{#2}
\providecommand\bibinfo[2]{#2}
\providecommand\natexlab[1]{#1}
\providecommand\showeprint[2][]{arXiv:#2}

\bibitem[\protect\citeauthoryear{Bostan, Gaudry, and Schost}{Bostan
  et~al\mbox{.}}{2007}]%
        {BoGaSc07}
\bibfield{author}{\bibinfo{person}{A. Bostan}, \bibinfo{person}{P. Gaudry},
  {and} \bibinfo{person}{{\'E}. Schost}.} \bibinfo{year}{2007}\natexlab{}.
\newblock \showarticletitle{Linear recurrences with polynomial coefficients and
  application to integer factorization and {C}artier-{M}anin operator}.
\newblock \bibinfo{journal}{\emph{SIAM J. Comput.}} \bibinfo{volume}{36},
  \bibinfo{number}{6} (\bibinfo{year}{2007}), \bibinfo{pages}{1777--1806}.
\newblock


\bibitem[\protect\citeauthoryear{Brent and Kung}{Brent and Kung}{1978}]%
        {BrKu78}
\bibfield{author}{\bibinfo{person}{R.~P. Brent} {and} \bibinfo{person}{H.~T.
  Kung}.} \bibinfo{year}{1978}\natexlab{}.
\newblock \showarticletitle{Fast Algorithms for Manipulating Formal Power
  Series}.
\newblock \bibinfo{journal}{\emph{J. ACM}} \bibinfo{volume}{25},
  \bibinfo{number}{4} (\bibinfo{year}{1978}), \bibinfo{pages}{581--595}.
\newblock
\showISSN{0004-5411}


\bibitem[\protect\citeauthoryear{Carlitz}{Carlitz}{1935}]%
        {Carlitz35}
\bibfield{author}{\bibinfo{person}{L. Carlitz}.}
  \bibinfo{year}{1935}\natexlab{}.
\newblock \showarticletitle{On certain functions connected with polynomials in
  a Galois field}.
\newblock \bibinfo{journal}{\emph{Duke Math. J.}} \bibinfo{volume}{1},
  \bibinfo{number}{2} (\bibinfo{year}{1935}), \bibinfo{pages}{137--168}.
\newblock


\bibitem[\protect\citeauthoryear{Caruso and Borgne}{Caruso and Borgne}{2017}]%
        {CaLe17}
\bibfield{author}{\bibinfo{person}{X. Caruso} {and} \bibinfo{person}{J.~Le
  Borgne}.} \bibinfo{year}{2017}\natexlab{}.
\newblock \showarticletitle{Fast multiplication for skew polynomials}. In
  \bibinfo{booktitle}{\emph{ISSAC'17}}. \bibinfo{publisher}{ACM},
  \bibinfo{pages}{77--84}.
\newblock


\bibitem[\protect\citeauthoryear{Coppersmith}{Coppersmith}{1994}]%
        {Coppersmith94}
\bibfield{author}{\bibinfo{person}{D. Coppersmith}.}
  \bibinfo{year}{1994}\natexlab{}.
\newblock \showarticletitle{Solving homogeneous linear equations over {GF}$(2)$
  via block {W}iedemann algorithm}.
\newblock \bibinfo{journal}{\emph{Math. Comp.}} \bibinfo{volume}{62},
  \bibinfo{number}{205} (\bibinfo{year}{1994}), \bibinfo{pages}{333--350}.
\newblock


\bibitem[\protect\citeauthoryear{Coppersmith and Winograd}{Coppersmith and
  Winograd}{1990}]%
        {CoWi90}
\bibfield{author}{\bibinfo{person}{D. Coppersmith} {and} \bibinfo{person}{S.
  Winograd}.} \bibinfo{year}{1990}\natexlab{}.
\newblock \showarticletitle{Matrix multiplication via arithmetic progressions}.
\newblock \bibinfo{journal}{\emph{J. Symb. Comput.}} \bibinfo{volume}{9},
  \bibinfo{number}{3} (\bibinfo{year}{1990}), \bibinfo{pages}{251--280}.
\newblock


\bibitem[\protect\citeauthoryear{{Doliskani}, {Narayanan}, and
  {Schost}}{{Doliskani} et~al\mbox{.}}{2017}]%
        {eschost2017arXiv171200669D}
\bibfield{author}{\bibinfo{person}{J. {Doliskani}}, \bibinfo{person}{A.~K.
  {Narayanan}}, {and} \bibinfo{person}{{\'E}. {Schost}}.}
  \bibinfo{year}{2017}\natexlab{}.
\newblock \bibinfo{title}{{Drinfeld modules with complex multiplication,
  {H}asse invariants and factoring polynomials over finite fields}}.
\newblock
\newblock
\showeprint{1712.00669}


\bibitem[\protect\citeauthoryear{Drinfel'd}{Drinfel'd}{1974}]%
        {Drinfeld74}
\bibfield{author}{\bibinfo{person}{V.~G. Drinfel'd}.}
  \bibinfo{year}{1974}\natexlab{}.
\newblock \showarticletitle{Elliptic modules}.
\newblock \bibinfo{journal}{\emph{Matematicheskii Sbornik}}
  \bibinfo{volume}{94}, \bibinfo{number}{23} (\bibinfo{year}{1974}),
  \bibinfo{pages}{561--593}.
\newblock


\bibitem[\protect\citeauthoryear{Garai and Papikian}{Garai and
  Papikian}{2018}]%
        {GaPa18}
\bibfield{author}{\bibinfo{person}{S. Garai} {and} \bibinfo{person}{M.
  Papikian}.} \bibinfo{year}{2018}\natexlab{}.
\newblock \bibinfo{title}{Endomorphism rings of reductions of Drinfeld
  modules}.
\newblock
\newblock
\showeprint{1804.07904}


\bibitem[\protect\citeauthoryear{Gathen and Gerhard}{Gathen and
  Gerhard}{2013}]%
        {Gathen:2003:MCA:945759}
\bibfield{author}{\bibinfo{person}{J.~von~zur Gathen} {and} \bibinfo{person}{J.
  Gerhard}.} \bibinfo{year}{2013}\natexlab{}.
\newblock \bibinfo{booktitle}{\emph{Modern Computer Algebra}
  (\bibinfo{edition}{3} ed.)}.
\newblock \bibinfo{publisher}{Cambridge University Press},
  \bibinfo{address}{New York, NY, USA}.
\newblock
\showISBNx{0521826462}


\bibitem[\protect\citeauthoryear{Gathen and Shoup}{Gathen and Shoup}{1992}]%
        {vonzurGathen1992}
\bibfield{author}{\bibinfo{person}{J.~von~zur Gathen} {and} \bibinfo{person}{V.
  Shoup}.} \bibinfo{year}{1992}\natexlab{}.
\newblock \showarticletitle{Computing Frobenius maps and factoring
  polynomials}.
\newblock \bibinfo{journal}{\emph{Computational Complexity}}
  \bibinfo{volume}{2}, \bibinfo{number}{3} (\bibinfo{year}{1992}),
  \bibinfo{pages}{187--224}.
\newblock


\bibitem[\protect\citeauthoryear{Gekeler}{Gekeler}{1991}]%
        {GEKELE1991187}
\bibfield{author}{\bibinfo{person}{E.-U. Gekeler}.}
  \bibinfo{year}{1991}\natexlab{}.
\newblock \showarticletitle{On finite Drinfeld modules}.
\newblock \bibinfo{journal}{\emph{Journal of Algebra}} \bibinfo{volume}{141},
  \bibinfo{number}{1} (\bibinfo{year}{1991}), \bibinfo{pages}{187 -- 203}.
\newblock
\showISSN{0021-8693}


\bibitem[\protect\citeauthoryear{Gekeler}{Gekeler}{2008}]%
        {frobdist}
\bibfield{author}{\bibinfo{person}{E.-U. Gekeler}.}
  \bibinfo{year}{2008}\natexlab{}.
\newblock \showarticletitle{Frobenius distributions of Drinfeld modules over
  finite fields}.
\newblock \bibinfo{journal}{\emph{Trans. Amer. Math. Soc.}}
  \bibinfo{volume}{360} (\bibinfo{date}{04} \bibinfo{year}{2008}),
  \bibinfo{pages}{1695--1721}.
\newblock


\bibitem[\protect\citeauthoryear{Giorgi, Jeannerod, and Villard}{Giorgi
  et~al\mbox{.}}{2003}]%
        {GiJeVi03}
\bibfield{author}{\bibinfo{person}{P. Giorgi}, \bibinfo{person}{C.-P.
  Jeannerod}, {and} \bibinfo{person}{G. Villard}.}
  \bibinfo{year}{2003}\natexlab{}.
\newblock \showarticletitle{On the complexity of polynomial matrix
  computations}. In \bibinfo{booktitle}{\emph{ISSAC'03}}.
  \bibinfo{publisher}{ACM}, \bibinfo{pages}{135--142}.
\newblock
\showISBNx{1-58113-641-2}


\bibitem[\protect\citeauthoryear{Goss}{Goss}{1996}]%
        {Goss96}
\bibfield{author}{\bibinfo{person}{D. Goss}.} \bibinfo{year}{1996}\natexlab{}.
\newblock \bibinfo{booktitle}{\emph{Basic Structures of Function Field
  Arithmetic}}.
\newblock \bibinfo{publisher}{Springer Berlin Heidelberg}.
\newblock


\bibitem[\protect\citeauthoryear{Harvey}{Harvey}{2014}]%
        {Harvey14}
\bibfield{author}{\bibinfo{person}{David Harvey}.}
  \bibinfo{year}{2014}\natexlab{}.
\newblock \showarticletitle{Counting points on hyperelliptic curves in average
  polynomial time}.
\newblock \bibinfo{journal}{\emph{Annals of Mathematics}}
  \bibinfo{volume}{179}, \bibinfo{number}{2} (\bibinfo{year}{2014}),
  \bibinfo{pages}{783--803}.
\newblock


\bibitem[\protect\citeauthoryear{Hsia and Yu}{Hsia and Yu}{2000}]%
        {HsYu00}
\bibfield{author}{\bibinfo{person}{L.-C. Hsia} {and} \bibinfo{person}{J. Yu}.}
  \bibinfo{year}{2000}\natexlab{}.
\newblock \showarticletitle{On characteristic polynomials of geometric
  {F}robenius associated to {D}rinfeld modules}.
\newblock \bibinfo{journal}{\emph{Compositio Mathematica}}
  \bibinfo{volume}{122}, \bibinfo{number}{3} (\bibinfo{year}{2000}),
  \bibinfo{pages}{261--280}.
\newblock


\bibitem[\protect\citeauthoryear{Jung}{Jung}{2000}]%
        {Jung00}
\bibfield{author}{\bibinfo{person}{F. Jung}.} \bibinfo{year}{2000}\natexlab{}.
\newblock \bibinfo{title}{Charakteristische {P}olynome von
  {D}rinfeld-{M}oduln}.
\newblock
\newblock
\newblock
\shownote{Diplomarbeit, U. Saarbr\"ucken.}


\bibitem[\protect\citeauthoryear{Kaltofen and Pan}{Kaltofen and Pan}{1991}]%
        {Kaltofen:1991:PEP:113379.113396}
\bibfield{author}{\bibinfo{person}{E. Kaltofen} {and} \bibinfo{person}{V.
  Pan}.} \bibinfo{year}{1991}\natexlab{}.
\newblock \showarticletitle{Processor efficient parallel solution of linear
  systems over an abstract field}. In \bibinfo{booktitle}{\emph{SPAA '91}}.
  \bibinfo{publisher}{ACM}, \bibinfo{pages}{180--191}.
\newblock
\showISBNx{0-89791-438-4}


\bibitem[\protect\citeauthoryear{Kaltofen and Saunders}{Kaltofen and
  Saunders}{1991}]%
        {Kaltofen-saun:1991:WMS:646027.676885}
\bibfield{author}{\bibinfo{person}{E. Kaltofen} {and} \bibinfo{person}{B.~D.
  Saunders}.} \bibinfo{year}{1991}\natexlab{}.
\newblock \showarticletitle{On Wiedemann's method of solving sparse linear
  systems}. In \bibinfo{booktitle}{\emph{AAECC-9}}.
  \bibinfo{publisher}{Springer-Verlag}, \bibinfo{pages}{29--38}.
\newblock
\showISBNx{3-540-54522-0}


\bibitem[\protect\citeauthoryear{Kaltofen and Shoup}{Kaltofen and
  Shoup}{1998}]%
        {KaSh98}
\bibfield{author}{\bibinfo{person}{E. Kaltofen} {and} \bibinfo{person}{V.
  Shoup}.} \bibinfo{year}{1998}\natexlab{}.
\newblock \showarticletitle{Subquadratic-time factoring of polynomials over
  finite fields}.
\newblock \bibinfo{journal}{\emph{Math. Comp.}} \bibinfo{volume}{67},
  \bibinfo{number}{223} (\bibinfo{year}{1998}), \bibinfo{pages}{1179--1197}.
\newblock


\bibitem[\protect\citeauthoryear{Kaltofen and Villard}{Kaltofen and
  Villard}{2004}]%
        {KaVi04}
\bibfield{author}{\bibinfo{person}{E. Kaltofen} {and} \bibinfo{person}{G.
  Villard}.} \bibinfo{year}{2004}\natexlab{}.
\newblock \showarticletitle{On the complexity of computing determinants}.
\newblock \bibinfo{journal}{\emph{Computational Complexity}}
  \bibinfo{volume}{13}, \bibinfo{number}{3-4} (\bibinfo{year}{2004}),
  \bibinfo{pages}{91--130}.
\newblock


\bibitem[\protect\citeauthoryear{Kaminski, Kirkpatrick, and Bshouty}{Kaminski
  et~al\mbox{.}}{1988}]%
        {KaKiBs88}
\bibfield{author}{\bibinfo{person}{M. Kaminski}, \bibinfo{person}{D.G.
  Kirkpatrick}, {and} \bibinfo{person}{N.H. Bshouty}.}
  \bibinfo{year}{1988}\natexlab{}.
\newblock \showarticletitle{Addition requirements for matrix and transposed
  matrix products}.
\newblock \bibinfo{journal}{\emph{J. Algorithms}} \bibinfo{volume}{9},
  \bibinfo{number}{3} (\bibinfo{year}{1988}), \bibinfo{pages}{354--364}.
\newblock


\bibitem[\protect\citeauthoryear{Kedlaya and Umans}{Kedlaya and Umans}{2011}]%
        {Kedlaya:2011:FPF:2340436.2340448}
\bibfield{author}{\bibinfo{person}{K.~S. Kedlaya} {and} \bibinfo{person}{C.
  Umans}.} \bibinfo{year}{2011}\natexlab{}.
\newblock \showarticletitle{Fast polynomial factorization and modular
  composition}.
\newblock \bibinfo{journal}{\emph{SIAM J. Comput.}} \bibinfo{volume}{40},
  \bibinfo{number}{6} (\bibinfo{year}{2011}), \bibinfo{pages}{1767--1802}.
\newblock


\bibitem[\protect\citeauthoryear{Labahn, Neiger, and Zhou}{Labahn
  et~al\mbox{.}}{2017}]%
        {LaNeZh17}
\bibfield{author}{\bibinfo{person}{G. Labahn}, \bibinfo{person}{V. Neiger},
  {and} \bibinfo{person}{W. Zhou}.} \bibinfo{year}{2017}\natexlab{}.
\newblock \showarticletitle{Fast, deterministic computation of the {H}ermite
  normal form and determinant of a polynomial matrix}.
\newblock \bibinfo{journal}{\emph{J. Complexity}}  \bibinfo{volume}{42}
  (\bibinfo{year}{2017}), \bibinfo{pages}{44--71}.
\newblock


\bibitem[\protect\citeauthoryear{Le~Gall}{Le~Gall}{2014}]%
        {LeGall14}
\bibfield{author}{\bibinfo{person}{F. Le~Gall}.}
  \bibinfo{year}{2014}\natexlab{}.
\newblock \showarticletitle{Powers of tensors and fast matrix multiplication}.
  In \bibinfo{booktitle}{\emph{ISSAC'14}}. \bibinfo{publisher}{ACM},
  \bibinfo{pages}{296--303}.
\newblock


\bibitem[\protect\citeauthoryear{{Le Gall} and Urrutia}{{Le Gall} and
  Urrutia}{2018}]%
        {LeUr18}
\bibfield{author}{\bibinfo{person}{F. {Le Gall}} {and} \bibinfo{person}{F.
  Urrutia}.} \bibinfo{year}{2018}\natexlab{}.
\newblock \showarticletitle{Improved rectangular matrix multiplication using
  powers of the {C}oppersmith-{W}inograd tensor}. In
  \bibinfo{booktitle}{\emph{SODA '18}}. \bibinfo{publisher}{SIAM},
  \bibinfo{pages}{1029--1046}.
\newblock


\bibitem[\protect\citeauthoryear{{Musleh, Yossef}}{{Musleh, Yossef}}{2018}]%
        {Musleh}
\bibfield{author}{\bibinfo{person}{{Musleh, Yossef}}.}
  \bibinfo{year}{2018}\natexlab{}.
\newblock \bibinfo{title}{Fast Algorithms for Finding the Characteristic
  Polynomial of a Rank-2 Drinfeld Module}.
\newblock
\newblock
\urldef\tempurl%
\url{http://hdl.handle.net/10012/13889}
\showURL{%
\tempurl}


\bibitem[\protect\citeauthoryear{Narayanan}{Narayanan}{2018}]%
        {Narayanan18}
\bibfield{author}{\bibinfo{person}{A.~K. Narayanan}.}
  \bibinfo{year}{2018}\natexlab{}.
\newblock \showarticletitle{Polynomial factorization over finite fields by
  computing {E}uler-{P}oincar{\'{e}} characteristics of {D}rinfeld modules}.
\newblock \bibinfo{journal}{\emph{Finite Fields Appl.}}  \bibinfo{volume}{54}
  (\bibinfo{year}{2018}), \bibinfo{pages}{335--365}.
\newblock


\bibitem[\protect\citeauthoryear{Panchishkin and Potemine}{Panchishkin and
  Potemine}{1989}]%
        {PaPo89}
\bibfield{author}{\bibinfo{person}{A. Panchishkin} {and} \bibinfo{person}{I
  Potemine}.} \bibinfo{year}{1989}\natexlab{}.
\newblock \showarticletitle{An algorithm for the factorization of polynomials
  using elliptic modules}. In \bibinfo{booktitle}{\emph{Constructive methods
  and algorithms in number theory}}. \bibinfo{pages}{117}.
\newblock


\bibitem[\protect\citeauthoryear{Pohst and Zassenhaus}{Pohst and
  Zassenhaus}{1989}]%
        {Pohst:1989:AAN:76692}
\bibfield{editor}{\bibinfo{person}{M. Pohst} {and} \bibinfo{person}{H.
  Zassenhaus}} (Eds.). \bibinfo{year}{1989}\natexlab{}.
\newblock \bibinfo{booktitle}{\emph{Algorithmic Algebraic Number Theory}}.
\newblock \bibinfo{publisher}{Cambridge University Press}.
\newblock
\showISBNx{0-521-33060-2}


\bibitem[\protect\citeauthoryear{Puchinger and Wachter-Zeh}{Puchinger and
  Wachter-Zeh}{2017}]%
        {PUCHINGER2017b}
\bibfield{author}{\bibinfo{person}{S. Puchinger} {and} \bibinfo{person}{A.
  Wachter-Zeh}.} \bibinfo{year}{2017}\natexlab{}.
\newblock \showarticletitle{Fast operations on linearized polynomials and their
  applications in coding theory}.
\newblock \bibinfo{journal}{\emph{J. Symb. Comput.}} (\bibinfo{year}{2017}).
\newblock


\bibitem[\protect\citeauthoryear{Satoh}{Satoh}{2000}]%
        {Satoh00}
\bibfield{author}{\bibinfo{person}{T Satoh}.} \bibinfo{year}{2000}\natexlab{}.
\newblock \showarticletitle{The canonical lift of an ordinary elliptic curve
  over a finite field and its point counting}.
\newblock \bibinfo{journal}{\emph{J. Ramanujan Math. Soc.}}
  \bibinfo{volume}{15} (\bibinfo{year}{2000}), \bibinfo{pages}{247--270}.
\newblock


\bibitem[\protect\citeauthoryear{Scanlon}{Scanlon}{2001}]%
        {Scanlon01}
\bibfield{author}{\bibinfo{person}{T. Scanlon}.}
  \bibinfo{year}{2001}\natexlab{}.
\newblock \showarticletitle{Public Key cryptosystems based on Drinfeld modules
  Are insecure}.
\newblock \bibinfo{journal}{\emph{Journal of Cryptology}} \bibinfo{volume}{14},
  \bibinfo{number}{4} (\bibinfo{year}{2001}), \bibinfo{pages}{225--230}.
\newblock


\bibitem[\protect\citeauthoryear{Schoof}{Schoof}{1985}]%
        {schoof85}
\bibfield{author}{\bibinfo{person}{R. Schoof}.}
  \bibinfo{year}{1985}\natexlab{}.
\newblock \showarticletitle{Elliptic curves over finite fields and the
  computation of square roots $\operatorname{mod} p$}.
\newblock \bibinfo{journal}{\emph{Math. Comp.}} \bibinfo{volume}{44},
  \bibinfo{number}{170} (\bibinfo{year}{1985}), \bibinfo{pages}{483--494}.
\newblock
\showISSN{00255718, 10886842}


\bibitem[\protect\citeauthoryear{Shoup}{Shoup}{1994}]%
        {Shoup94}
\bibfield{author}{\bibinfo{person}{V. Shoup}.} \bibinfo{year}{1994}\natexlab{}.
\newblock \showarticletitle{Fast construction of irreducible polynomials over
  finite fields}.
\newblock \bibinfo{journal}{\emph{J. Symb. Comput.}} \bibinfo{volume}{17},
  \bibinfo{number}{5} (\bibinfo{year}{1994}), \bibinfo{pages}{371--391}.
\newblock


\bibitem[\protect\citeauthoryear{Shoup}{Shoup}{2019}]%
        {shoup2001ntl}
\bibfield{author}{\bibinfo{person}{V. Shoup}.} \bibinfo{year}{2019}\natexlab{}.
\newblock \bibinfo{title}{{NTL}: A library for doing number theory}.
\newblock
\newblock
\newblock
\shownote{\url{http:/www.shoup.net/ntl}.}


\bibitem[\protect\citeauthoryear{van~der Heiden}{van~der Heiden}{2004}]%
        {vanderHeiden04}
\bibfield{author}{\bibinfo{person}{G.~J. van~der Heiden}.}
  \bibinfo{year}{2004}\natexlab{}.
\newblock \showarticletitle{Factoring polynomials over finite fields with
  {D}rinfeld modules}.
\newblock \bibinfo{journal}{\emph{Math. Comp.}}  \bibinfo{volume}{73}
  (\bibinfo{year}{2004}), \bibinfo{pages}{317--322}.
\newblock


\bibitem[\protect\citeauthoryear{Wiedemann}{Wiedemann}{1986}]%
        {Wiedemann:1986:SSL:13738.13744}
\bibfield{author}{\bibinfo{person}{D~H Wiedemann}.}
  \bibinfo{year}{1986}\natexlab{}.
\newblock \showarticletitle{Solving Sparse Linear Equations over Finite
  Fields}.
\newblock \bibinfo{journal}{\emph{IEEE Trans. Inf. Theor.}}
  \bibinfo{volume}{32}, \bibinfo{number}{1} (\bibinfo{year}{1986}),
  \bibinfo{pages}{54--62}.
\newblock
\showISSN{0018-9448}


\end{thebibliography}

\end{document}